\newtheorem{theorem}{Theorem}
\newtheorem{lemma}[theorem]{Lemma}
\DeclareMathOperator*{\argmin}{argmin}
\newenvironment{enum}{
\begin{enumerate}
  \setlength{\itemsep}{1pt}
  \setlength{\parskip}{0pt}
  \setlength{\parsep}{0pt}
}{\end{enumerate}}
\let\hat\widehat
\let\tilde\widetilde
\begin{document}

\begin{center} 
{{\textbf{\Large \textsf{Analysis of a Mode Clustering Diagram}}}}
\end{center}

\begin{center}
{\large{

\begin{tabular}{ccccc}
Isabella Verdinelli$^\dagger$ and Larry Wasserman$^{\dagger,\ast}$  \\
\end{tabular}
\vspace*{.1in}

%\begin{flushleft}
%\begin{tabular}{cc}
\begin{center}
Department of Statistics and Data Science$^{\dagger}$ \\
and Machine Learning Department$^{\ast}$
\end{center}
%\end{flushleft}
%\begin{tabular}{c}

\begin{center}
Carnegie Mellon University
\end{center}

\vspace*{-.2in}

\begin{tabular}{c}
{\texttt{$\{$isabella,larry$\}$@stat.cmu.edu}}
\end{tabular}
}}
\end{center}

\vspace*{-.2in}

\begin{quote}
{\em Mode-based clustering methods
define clusters to be the basins of attraction
of the modes
of a density estimate.
The most common version is mean shift clustering
which uses a gradient ascent algorithm to find the basins.
\cite{rodriguez2014clustering}
introduced a new method 
that is faster and simpler than mean shift clustering.
Furthermore, they define a clustering diagram that provides a simple,
two-dimensional summary
of the mode clustering information.
We study the statistical properties of this diagram
and we propose some improvements and extensions.
In particular, we show a connection between the diagram and robust linear regression.}
\end{quote}

\section{Introduction}

Mode-based clustering methods
define clusters in terms of the modes 
of the density function.
For example,
the mean-shift clustering method
\citep{comaniciu2002mean, cheng1995mean}
defines the clusters to be the basins of attraction of each mode.
Specifically, if we take any point $x$ and follow the path of steepest
ascent of the density, then we end up at a mode.
This assigns every point to a mode
which forms thus a partition of the space.
In practice, the density is estimated using a kernel 
density estimate.
The mean shift algorithm
then approximates the steepest ascent paths.

\citep{rodriguez2014clustering}
introduced a new approach to mode-based clustering
that avoids iterative computation of the density estimator.
Furthermore,
they define
a diagram --- which we call the
{\em mode clustering diagram} ---
that provides a useful summary of the clustering information.
The diagram is simply a plot of the pairs
$(p(X_i),\delta(X_i))$
where $p(X_i)$ is the density
of the $i^{\rm th}$ point and
$\delta(X_i)$ is the distance to the nearest
neighbor with higher density.
Modes appear as isolated points in the top right of the diagram.
See Figure \ref{fig::FIG3} for a simple example.

\cite{rodriguez2014clustering} appeared in {\em Science}
and has received over 1,000 citations but, to the best of our knowledge,
has not been examined in the statistics literature.
In this paper, we study the properties of the mode diagram.
These properties then suggest a heuristic for deciding which points are modes.
Specifically, if we perform a robust linear regression of
$\log \delta(X_i)$ on 
$\log p(X_i)$ 
then modes correspond to large, positive outliers.

%\begin{figure}
\begin{center}
\includegraphics[scale=.7]{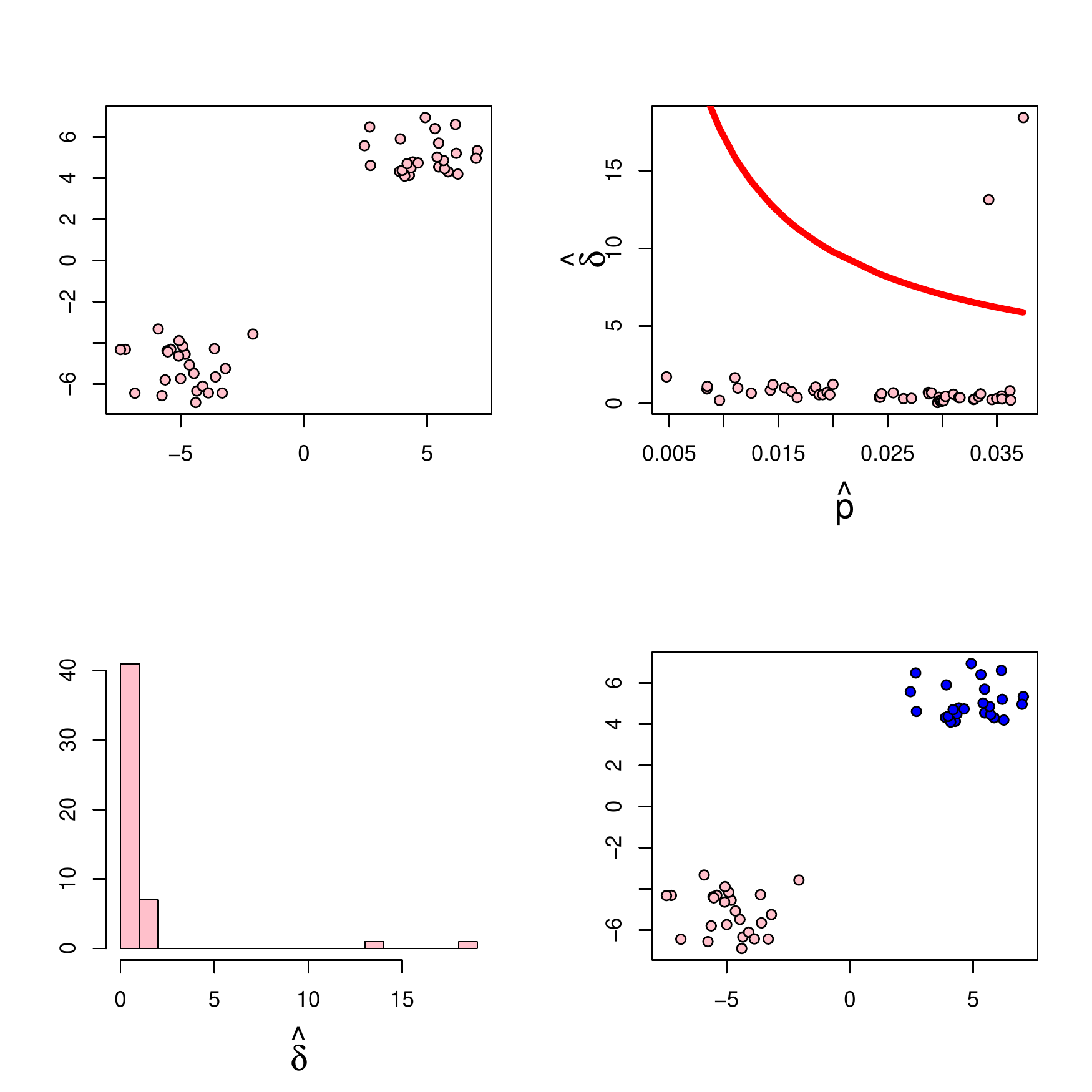}
\end{center}
\begin{center}
\vspace{-.2in}
\captionof{figure}{\em Top left: data.
Top right: the mode diagram.
The curved line is the threshold function $t_n$
corresponding to a robust linear regression 
of $\log \hat\delta$ on $\log \hat p$.
Points above the function are declared to be modes.
Bottom left: Histogram of $\hat \delta(X_i)$.
Bottom right: the resulting clusters.}
\label{fig::FIG3}
\end{center}
%\end{figure}

\subsection{Related Work}

The most common mode-based
clustering method is mean-shift clustering,
developed by
\cite{cheng1995mean} and
\cite{comaniciu2002mean}.
The method has been developed
in the statistics literature by
\cite{li2007nonparametric,
arias2015estimation,
chacon2015population,
chacon2013data,
chacon} and
\cite{genovese2016non}.

The new method --- the subject of this paper ---
is due to
\cite{rodriguez2014clustering}.
Many extensions and improvements
have since been proposed.
These extensions include speedups and methods for dealing with higher
dimensional problems.
Three highly cited such papers are \cite{wang2015fast,
du2016study,
courjault2016improved}.
In addition, there are now hundreds of 
scientific papers that apply the method to various applications.

\subsection{Paper Outline}

In Section
\ref{section::notation}
we establish the notation and the assumptions.
We review mode-based clustering in
Section \ref{section::clustering}.
We establish the theoretical properties of the population version
of the mode diagram in Section \ref{section::oracle}.
We then consider the estimated diagram
in Section \ref{section::not-oracle}.
Based on these results,
we suggest a method for thresholding the diagram
in Section \ref{section::threshold}.
In Section \ref{section::examples}
we illustrate the method with several examples.
Section \ref{section::conclusion} contains some concluding remarks.
All proofs are in the appendix.

\section{Notation and Assumptions}
\label{section::notation}

Let $X_1,\ldots, X_n$ be a sample from a distribution $P$
on $\mathbb{R}^d$.
We make the following assumptions throughout the paper:

(A1) $P$ is supported on a compact set ${\cal C}$ and
has bounded, continuous density $p$. Also, $\inf_{x\in {\cal C}}p(x) \geq  a > 0$.

(A2) $p$ has bounded and continuous 
first, second and third derivatives.
We let $g$ denote the gradient and we let $H$ denote the Hessian.

Recall that $x$ is a critical point if $||g(x)||=0$.
A function is {\em Morse}
\citep{milnor2016morse} if the Hessian is non-degenerate 
at every critical point.

(A3) $p$ is Morse with finitely many critical points.

The Morse assumption is critical to our proofs.
It may be possible to drop this assumption but the 
proof techniques would
have to change considerably.
The assumption that $\inf_{x \in {\cal X}}p(x) \geq a >0$
is not critical
and could be dropped at the expense of more 
involved statements and proofs.
More specifically,
the proofs then require dividing the sample space into
two regions: the first where
$p(x) \geq n^{-\frac{1}{d+2}}$ and the second where
$p(x) < n^{-\frac{1}{d+2}}$.
Also, points with
$\hat p(x) < n^{-\frac{1}{d+2}}$ should be removed from 
the cluster diagram.

\bigskip
A point $x$ is a mode if
there exists an $\epsilon>0$ and a ball
$B(x,\epsilon)$ such that
$p(x) > p(y)$ for all $y\in B(x,\epsilon)$, $y\neq x$.
Let ${\cal M} =\{m_1,\ldots, m_k\}$ denote the modes.
Because $p$ is Morse,
$x$ is a mode
if and only if
$g(x) = (0,\ldots, 0)^T$ and
$\lambda_{\max}(H(x)) < 0$
where $\lambda_{\max}(A)$ denotes the 
largest eigenvalue of a matrix $A$.

\begin{center}
\includegraphics[scale=.5]{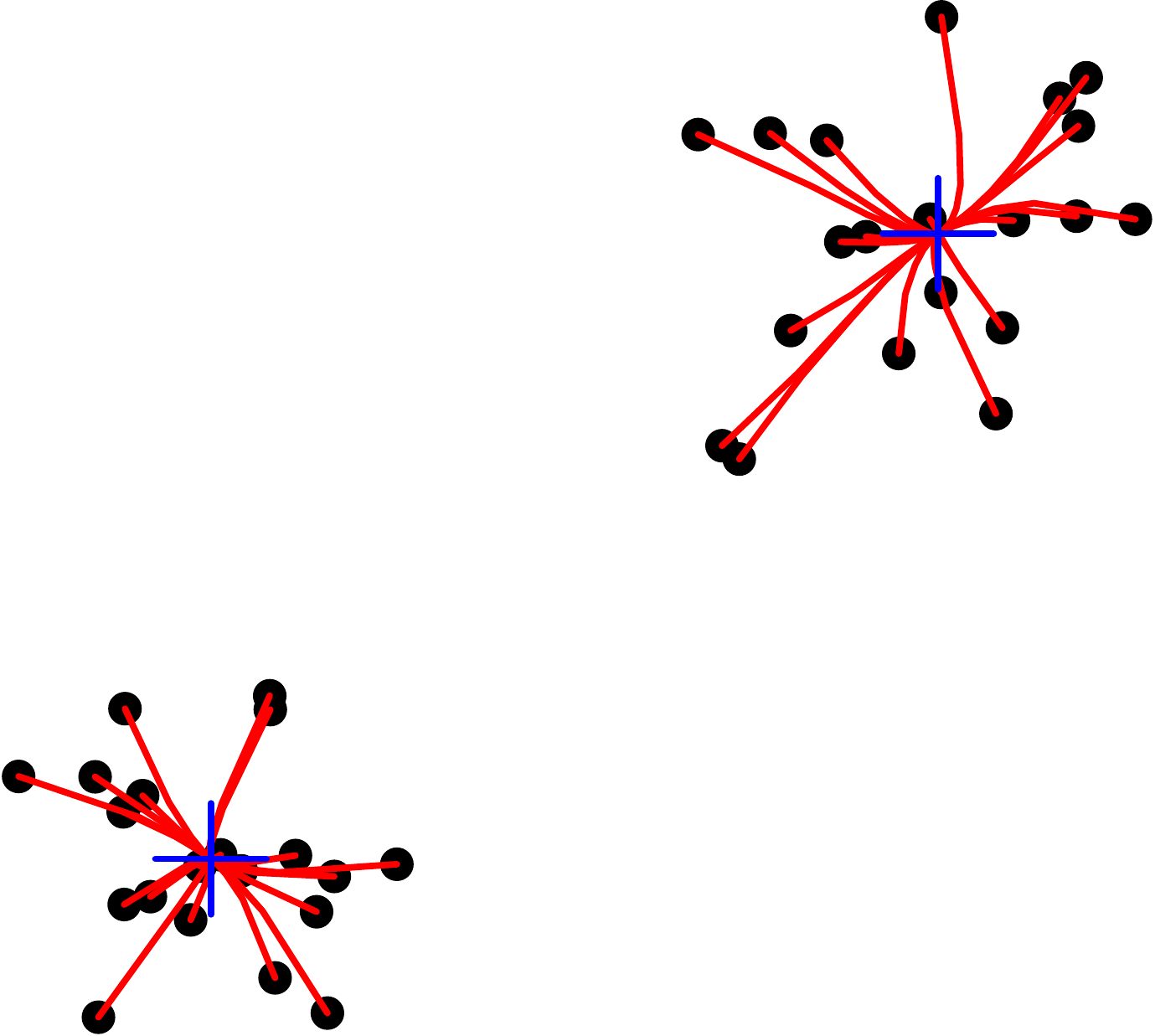}
\end{center}
\begin{center}
\captionof{figure}{\em The mean shift algorithm.
The data are represented by the black dots.
The modes of the density estimate are the two blue cross. 
The red curves show the mean shift paths; each data point moves 
along its path towards a mode
as we iterate the algorithm.}
\label{fig::simplemeanshift}
\end{center}

\section{Density Mode Clustering}
\label{section::clustering}

In this section,
we review mode-based clustering
beginning with mean-shift clustering and
then we move on to the
approach in \cite{rodriguez2014clustering}.

\subsection{Mean-Shift Clustering}

The most common mode-based
clustering method
is mean-shift clustering
\citep{chacon2015population, chacon2013data, chacon,
li2007nonparametric, comaniciu2002mean, arias2015estimation, 
cheng1995mean, genovese2016non}. A simple illustration is in 
Figure \ref{fig::simplemeanshift}. 
The idea is to find modes of the density and then define
clusters as the basins of attraction of the modes.

Let $x$ be an arbitrary point.
If we follow the steepest gradient ascent path starting at $x$,
we will eventually end up at one of the modes.
More precisely,
the gradient ascent path
(or integral curve)
starting at $x$
is the function
$\pi_x: \mathbb{R}\to\mathbb{R}^d$
defined by the
differential equation
\begin{equation}\label{eq::flow}
\pi_x'(t) = \nabla p(\pi_x(t)),\ \ \ \pi_x(0) =x.
\end{equation}

\vspace*{-.6in}
%\begin{figure}
\begin{center}
\begin{tabular}{ll}
\includegraphics[scale=.4]{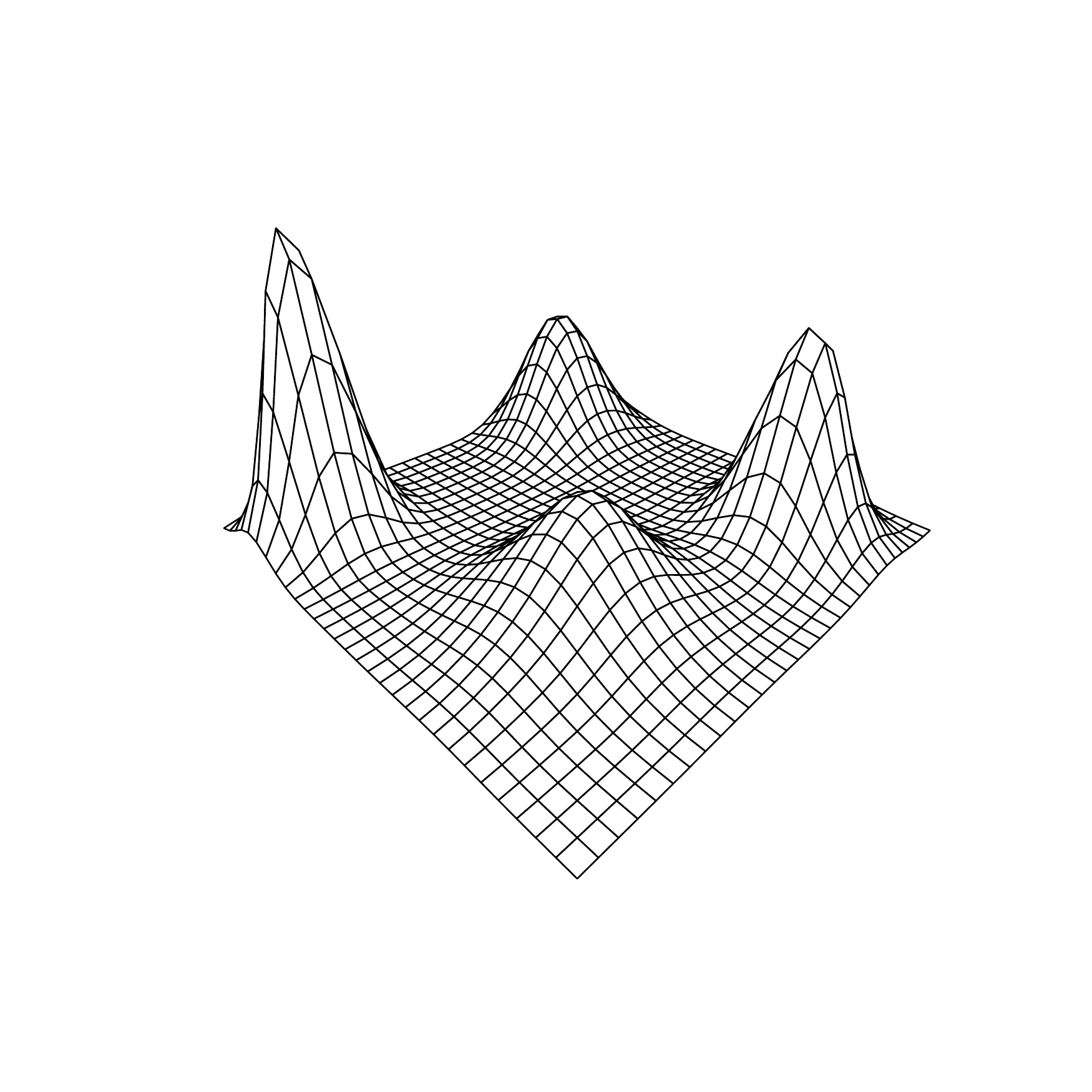}
\includegraphics[scale=.3]{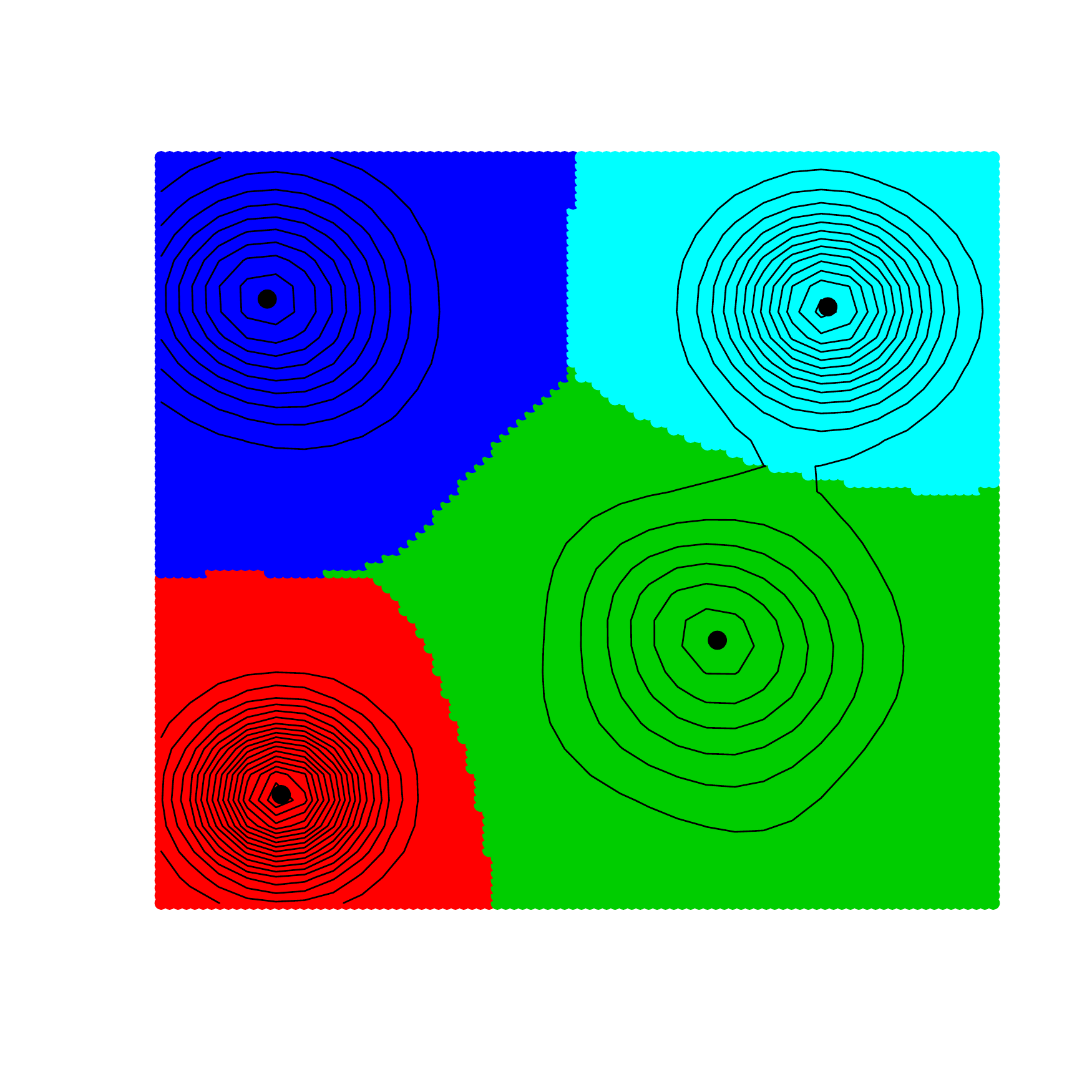}
\end{tabular}
\end{center}
\begin{center}
\captionof{figure}{\em Left: a density with four modes. 
Right: the partition (basins of attraction) of the space 
induced by the modes. These are the population clusters.}
\label{fig::mode-fig}
\end{center}
%\end{figure}

The {\em destination} of $x$ is defined by
\begin{equation}
{\rm dest}(x) = \lim_{t\to\infty}\pi_x(t).
\end{equation}
It can be shown that, for almost all $x$,
${\rm dest}(x)\in {\cal M}$.
(The exceptions, which have measure 0, lead to saddle points.)
The path $\pi_x$ defines the gradient flow from a point $x$ to 
its corresponding mode.

The {\em basin of attraction} of the mode $m_j$ is the set
\begin{equation}
{\cal C}_j = \Bigl\{x:\ {\rm dest}(x) = m_j\Bigr\}.
\end{equation}
In the mean-shift approach to clustering,
the population clusters are defined to be the basins of attraction
$C_1,\ldots, C_k$.
The left plot in Figure \ref{fig::mode-fig} shows a 
bivariate density with four modes.
The right plot shows the partition (basins of attractions) 
induced by the modes.

To estimate the
clusters, we find the modes
$\hat {\cal M} =\{\hat m_1,\ldots, \hat m_r\}$
of a density estimate $\hat p$.
A simple iterative algorithm called the
{\em mean shift algorithm}
\citep{cheng1995mean, comaniciu2002mean}
can be used to find the modes and to
find the destination of any point $x$
when $\hat p$ is the kernel density estimator:
\begin{equation}\label{eq::kernel}
\hat p(x) = \frac{1}{n}\sum_{i=1}^n \frac{1}{h^d}K\left(\frac{x-X_i}{h}\right)
\end{equation}
with kernel $K$ and bandwidth $h$.
For any given $x$, we define the iteration,
$x^{(0)} \equiv x$,
$$
x^{(j+1)} = \frac{\sum_i X_i K\left(\frac{||x^{(j)}-X_i||}{h}\right)}
{\sum_i K\left(\frac{||x^{(j)}-X_i||}{h}\right)}.
$$
See Figure \ref{fig::simplemeanshift}.
It can be shown that this algorithm is an adaptive 
gradient ascent method,
that approximates the gradient flow defined by
(\ref{eq::flow}).
The convergence of this algorithm is studied in
\cite{arias2015estimation}.

\subsection{The Mode Diagram}

Following \cite{rodriguez2014clustering},
we define
\begin{equation}
\delta(X_i) = 
\min\Bigl\{ ||X_j - X_i||:\ p(X_j) > p(X_i)\Bigr\}.
\end{equation}
That is,
$\delta(X_i)$ is the distance of $X_i$ to the closest point
 with higher 
density. In the case where there are no points with $p(X_j) > p(X_i)$
(in other words, $p(X_i) = \max_j \{ p(X_j):\ j=1,\ldots,n\}$)
we define
$\delta(X_i)=L$
where $L$ is any, arbitrary positive constant.
The choice of $L$ does not matter.
In practice,
\cite{rodriguez2014clustering}
suggest
setting
$L = \max_{i,j}||X_j - X_i||$
which is the diameter of the dataset.
In our examples, this is what we shall do.
For developing theory, it will be convenient to just keep $L$
as any arbitrary, positive constant.

Next we form the {\em mode diagram},
where we plot the pairs
$(p(X_i),\delta(X_i))$.
The intuition is that $\delta(X_i)$ will be small for most points.
But if $X_i$ is close to a local mode, then
$\delta(X_i)$ will be large since the nearest point with a higher density will be at 
another mode.

Hence, modes will show up as isolated points
in the top right of the mode diagram.
A simple example is shown in Figure \ref{fig::FIG3}.
Formally, the {\em mode diagram} is the collection of pairs
\begin{equation}
{\cal D} = \Bigl\{ (p(X_i),\ \delta(X_i)):\ i=1,\ldots, n \Bigr\}.
\end{equation}
The modes can be identified by inspection of the diagram.
In this paper, we suggest a method to separate modes from 
non-modes using linear regression.

In practice, 
we need to estimate $p$.
We will use
the kernel density estimator
defined in (\ref{eq::kernel}).
Then we define
$\hat\delta(X_i) = \min\{ ||X_j-X_i||:\ \hat p(X_j) > \hat p(X_i)\}$.
We have to decide which points on the diagram correspond to modes.
For this purpose,
let $t_n: \mathbb{R}\to\mathbb{R}$ be a given function.
The points $X_i$ such that
$$
\hat\delta(X_i) > t_n(\hat p(X_i))
$$
are the estimated modes.
Denote these points by
$\hat{\cal M} = \{ \hat m_1,\ldots, \hat m_\ell\}$.
We call $t_n$ the {\em threshold function}.

The main contribution of this paper is to study the properties
of the mode diagram.
We shall see that
the mean of $\log \delta(X_i)$ as a function of $\log p(X_i)$ is approximately linear, for non-modes.
But when $X_i$ is close to a mode, $\log \delta(X_i)$ lies far above the line.
This suggests the following method for separating modes from non-modes.
We perform a robust linear regression of
$\log \hat\delta(X_i)$ on
$\log \hat p(X_i)$.
That is, we find $\hat\beta_0$ and $\hat\beta_1$ such that
$$
\log \hat \delta (X_i) \approx \hat\beta_0 + \hat\beta_1 \,  \log \hat p (X_i).
$$

Then we look for large positive outliers.
These are points for which
$\log \hat \delta(X_i) > \hat\beta_0 + \hat\beta_1  \log \hat p(X_i) + M s$
where $s$ is the estimated residual standard deviation 
and $M$ is some large constant; 
we use $M=3$ for the examples in this paper.
These points correspond to modes.
This corresponds to taking the threshold function
\begin{equation}\label{eqn::tn}
t_n(u) = e^{\hat\beta_0+ Ms} u^{\hat\beta_1}.
\end{equation}
The reason for this choice of threshold function
arises from the theory in Section \ref{section::oracle}.

To assign points to modes,
\cite{rodriguez2014clustering} suggest
an approach that avoids the iterations of the mean-shift method.
Instead, 
we assign each point to nearest neighbor with higher density.
This leads each sample point to a mode
without having to recompute the density estimator at any other points.
This is essentially a sample-based approximation to the gradient.
The steps of the algorithm are summarized in Figure \ref{fig::mode-alg}.

This method has several advantages over mean-shift clustering.
We never need to estimate or approximate the gradient of the density.
There is no need for any iterative calculation of the density.
This makes the method fast. However, our focus is not on the 
algorithm but on the mode diagram which gives a nice, 
two-dimensional summary of the clustering information.

\begin{figure}
\fbox{\parbox{5in}{
Input: Data $\{X_1,\ldots, X_n\}$, threshold function $t_n$.
\begin{enum}
\item Compute density estimator $\hat p(X_i)$ at each point.
\item Define $\hat\delta(X_i) = \min \{ ||X_j-X_i||:\ \hat p(X_j) 
              > \hat p(X_i)\}$.
Take $\hat\delta(X_i) = \max_{i,j} ||X_i - X_j||$ if
$\hat p(X_i)> \hat p(X_j)$ for all $j$.
\item Let $\hat{\cal M} = \{X_i :\ \hat\delta(X_i) > t_n(\hat p(X_i))\}$.
\item Cluster assignment: for each $X_i$, move to the closest point 
with higher density.
Continue until a mode is reached.
\item Let $\overline{C}_j$ be all points assigned to 
$m_j \in \hat{\cal M}$.
\end{enum}

Return: $\overline{C}_1,\ldots, \overline{C}_\ell$.
}}
\captionof{figure}{\em The Rodriguez-Laio Algorithm.}
\label{fig::mode-alg}
\end{figure}

\section{The Oracle Diagram}
\label{section::oracle}

In this section we assume that the density $p$ is known.
We then call
${\cal D} = \{ (p(X_i),\delta(X_i)):\ i=1,\ldots, n\}$
the {\em oracle diagram}.
Note that ${\cal D}$ is a point process on $\mathbb{R}^2$.
The variables $\delta(X_i)$ are not independent
since $\delta(X_i)$ depends on the configuration
of the other points.

We need the following definition from Cuevas et al (1990).
A set $S$ is $(\gamma,\tau)$-standard if there exist
$\epsilon_0 >0$ and $\tau\in (0,1)$
such that:
for all $0 < \epsilon \leq \epsilon_0$ and all $x\in S$,
\begin{equation}
\mu\Bigl(B(x,\epsilon)\bigcap S\Bigr)\geq \tau \,\mu(B(x,\epsilon)).
\end{equation}
A set that is standard does not have sharp protrusions.
Our proofs require the assumption that the level sets $\{p>t\}$
are standard.
However, this requires some care.
Suppose that $x=m_j$ where $m_j$ is a mode of $p$.
Let $S = \{y:\ p(y) \geq p(x)\}$.
Then $S \bigcap B(x,\epsilon) = \{m_j\}$ and so
$\mu(S \bigcap B(x,\epsilon)) =0$
and standardness thus fails for points that are modes.
More generally, we cannot lower bound
$\mu(\{ p(y) \geq p(x)\} \bigcap B(x,\epsilon))$
unless $x$ is at least $\epsilon$ far from the modes.
We use the following restricted standardness assumption.
Let $L_x = \{y:\ p(y) > p(x)\}$.

(A4) There exists $\epsilon_0>0$ and $\tau\in (0,1)$ such that,
whenever $\min_j ||x-m_j|| > t$ with $0 < t < \epsilon_0$, we have that
\begin{equation}
\mu\Bigl(L_x \bigcap B(x,t)\Bigr)\geq \tau \mu(B(x,t)).
\end{equation}

In the rest of the section we 
assume that (A1)-(A4) hold.

Before proceeding,
we need a bit more notation.
Recall that
${\cal M} = \{ m_1,\ldots, m_k\}$ is the set of true modes.
Let $m_j\in {\cal M}$.
Let $H(m_j)$ be the Hessian at $m_j$.
Let $J(m_j) = - H(m_j)$ and $\lambda_j$ be the 
smallest eigenvalue of $J(m_j)$.
Note that $\lambda_j >0$.
Since the Hessian is a continuous function,
there exists $\omega_j>0$ such that 
$\lambda_{\rm min}(J(x)) \geq \lambda_j/2$,
for all $x\in B(m_j,\omega_j)$.
Let $\Lambda_j = \sup_{x\in B(m_j,\omega)}\lambda_{\rm max}(J(x))$.

Define 
\begin{equation}\label{eq::tn-def}
\epsilon_n = \left(\frac{r\log n}{n}\right)^{\frac{1}{d}},\qquad
t_n(u) = \left(\frac{C\log n}{n\,u}\right)^{\frac{1}{d}}
\end{equation}
where
\begin{equation}\label{eq::C}
%%C \geq \sqrt{2} G\max_j \frac{p(m_j) \Lambda_j}{\lambda_j},\ \ \ 
C \geq G^d 2^{d/2} r\max_j p(m_j)
        \left(\frac{\Lambda_j}{\lambda_j}\right)^{d/2}, 
\end{equation}
and
\begin{equation}\label{eq::G}
G \geq \max\Biggl\{\max_j
       \left[3\left(\frac{\lambda_j}{\Lambda_j}\right)^d 
       \frac{1}{2^{d/2}v_d a \tau}\right]^{\frac{1}{d}},\,1\Biggr\} 
\end{equation}
where $v_d$ denotes the volume of the unit ball in $\mathbb{R}^d$
and $r> 1/(av_d)$.

{\bf Remark:}
{\em The constants --- such as $C$, $r$, $G$ and so on --- are only used to state
the theoretical results.
The actual procedure described in Section 6 does not require these constants.}

Because $p$ is Morse,
the modes are isolated points.
It follows that
there exists some $c>0$ such that
$B(m_s,c\omega_s)\bigcap B(m_t,c\omega_t)=\emptyset$ for
all $1\leq s < t \leq k$.
Without loss of generality, 
we assume that $c=1$.
Hence,
$B(m_s,\omega_s)\bigcap B(m_t,\omega_t)=\emptyset$ for
$1\leq s < t \leq k$.

We assume that (A1)-(A4) hold
in the rest of the paper.

\subsection{Properties of the Mode Diagram}
\label{section::mode-diagram}

We first need to define $k$ sample points that can be considered 
to be sample modes.
These are the points that will be in the upper right 
portion of the mode diagram. (For a unimodal density, 
this would just be the point $X_i$ that maximizes $p(X_i)$.)
We define the sample point $X_j \in B(m_j,\omega_j)$ to be a 
{\em sample mode} if 
$p(X_j) \geq p(X_i)\ \ {\rm for\ all\ } X_i \in B(m_j,\omega_j)$.
We renumber the points so that
$X_1,\ldots, X_k$ denote the $k$ sample modes.
Note that these sample modes are not known since they depend on $m_j$
and $\omega_j$.
But, as we shall see, we can identify them by using the mode diagram.
The next result shows 
that $X_j$ is close to $m_j$ and
that $\delta(X_j)$ is bounded below by a constant.

\begin{theorem}
\label{thm::sample-mode}
Let
\begin{equation}
\psi^2_{n,j} = \frac{2\,G^2\,\Lambda_j}{\lambda_j}\;\epsilon_n^2
\end{equation}
where 
$G$ was defined in (\ref{eq::C}).
Also, let
$\psi_n = \max_j \psi_{n,j}$.
If $X_j$ is a sample mode then:

(i) $X_j \in B(m_j,\psi_{n,j})$.

(ii) Any sample point $X_i$ such that $p(X_i) > p(X_j)$ is 
far from $X_j$; specifically $\delta(X_i) \geq \omega_j/2$.
\end{theorem}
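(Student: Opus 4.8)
The plan is to establish (i) from a second–order Taylor expansion of $p$ near the mode $m_j$ together with an elementary covering estimate, and then to read off (ii) as a short consequence of (i). Throughout I would work on the event $\mathcal{E}_n$ that each ball $B(m_j,\epsilon_n)$ contains at least one sample point; I argue at the end that $\P(\mathcal{E}_n)\to 1$, and note that on $\mathcal{E}_n$ each sample mode $X_j$ is well defined.

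\textbf{Part (i).} First I would expand $p$ about $m_j$. Since $m_j$ is a mode, $g(m_j)=0$, so for any $x$ there is $\xi$ on the segment $[m_j,x]$ with $p(m_j)-p(x)=\tfrac12(x-m_j)^T J(\xi)(x-m_j)$, where $J=-H$. For $x\in B(m_j,\omega_j)$ the intermediate point $\xi$ also lies in $B(m_j,\omega_j)$, so $\lambda_{\min}(J(\xi))\ge\lambda_j/2$ and $\lambda_{\max}(J(\xi))\le\Lambda_j$, which gives
$$\frac{\lambda_j}{4}\,\|x-m_j\|^2\ \le\ p(m_j)-p(x)\ \le\ \frac{\Lambda_j}{2}\,\|x-m_j\|^2,\qquad x\in B(m_j,\omega_j).$$
On $\mathcal{E}_n$ there is a sample point $X'\in B(m_j,\epsilon_n)\subseteq B(m_j,\omega_j)$ (for $n$ large), so the upper bound gives $p(X')\ge p(m_j)-\tfrac{\Lambda_j}{2}\epsilon_n^2$. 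Since $X_j$ maximizes $p$ over the sample points in $B(m_j,\omega_j)$, $p(X_j)\ge p(X')$; combining this with the lower bound applied at $X_j$ yields $p(m_j)-\tfrac{\lambda_j}{4}\|X_j-m_j\|^2\ge p(X_j)\ge p(m_j)-\tfrac{\Lambda_j}{2}\epsilon_n^2$, hence $\|X_j-m_j\|^2\le\tfrac{2\Lambda_j}{\lambda_j}\epsilon_n^2\le\tfrac{2G^2\Lambda_j}{\lambda_j}\epsilon_n^2=\psi_{n,j}^2$ because $G\ge 1$. This is exactly $X_j\in B(m_j,\psi_{n,j})$.

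\textbf{Part (ii).} Take any sample point $X_i$ with $p(X_i)>p(X_j)$. Because $X_j$ has the largest density among the sample points in $B(m_j,\omega_j)$, $X_i$ cannot lie in $B(m_j,\omega_j)$, so $\|X_i-m_j\|\ge\omega_j$. By part (i), $\|X_j-m_j\|\le\psi_{n,j}$, and $\psi_{n,j}\to 0$, so $\psi_{n,j}<\omega_j/2$ for $n$ large; the triangle inequality then gives $\|X_i-X_j\|\ge\|X_i-m_j\|-\|X_j-m_j\|\ge\omega_j-\psi_{n,j}>\omega_j/2$. Since every sample point of density exceeding $p(X_j)$ is thus at distance at least $\omega_j/2$ from $X_j$, this says precisely that $\delta(X_j)\ge\omega_j/2$.

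\textbf{Main obstacle.} The one step requiring care is the covering estimate behind $\mathcal{E}_n$, and it is also where the constant $r$ is used: by (A1), $P\bigl(B(m_j,\epsilon_n)\bigr)\ge a v_d\epsilon_n^d=a v_d r(\log n)/n$ (the ball lies in $\mathcal{C}$ for $n$ large, $m_j$ being interior), so $\P\bigl(B(m_j,\epsilon_n)\text{ empty}\bigr)\le(1-a v_d r(\log n)/n)^n\le n^{-a v_d r}$, and a union bound over the $k$ modes gives $\P(\mathcal{E}_n^c)\le k\,n^{-a v_d r}$. The hypothesis $r>1/(a v_d)$ forces $a v_d r>1$, so $\P(\mathcal{E}_n^c)\to 0$, and, being summable, $\mathcal{E}_n$ in fact holds eventually almost surely; I would state (i)–(ii) accordingly on an event of probability tending to one. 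I would also remark that the standardness assumption (A4) plays no role here — it enters only in the results concerning arbitrary sample points — while everything else (the Taylor bounds, the triangle inequality, and checking that $n$ is large enough that $\epsilon_n<\omega_j$ and $\psi_{n,j}<\omega_j/2$) is routine bookkeeping.
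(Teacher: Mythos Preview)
Your proposal is correct and follows essentially the same route as the paper's proof: a quadratic Taylor expansion of $p$ around $m_j$ (the paper uses the integral remainder, you use a Lagrange-type intermediate point, but the eigenvalue bounds on $J$ over $B(m_j,\omega_j)$ make the two equivalent), combined with the covering event that each $B(m_j,\epsilon_n)$ contains a sample point (the paper states this as a separate lemma about the nearest sample point $X_{(j)}$, you phrase it as $\mathcal{E}_n$; both use the same $(1-av_d\epsilon_n^d)^n$ estimate and the condition $r>1/(av_d)$). Part (ii) is the same triangle-inequality argument in both, and you correctly read the conclusion as $\delta(X_j)\ge\omega_j/2$, which is also what the paper proves despite the typo $\delta(X_i)$ in the statement.
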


\bigskip

Now let $m_j, j=1,\ldots, k$, denote the modes of $p(x)$, and 
let $X_j, j=1,\ldots, k$, be the local modes.
Define $\Gamma = \bigcup_{j=1}^k B(\,m_j,\psi_{n})$
and divide the dataset into three groups:
\begin{equation}
\label{eq::sample-modes}
{\cal X}_1 = \biggl \{X_1,\ldots, X_k \biggl\}, \ \ \ 
{\cal X}_2 = \biggl\{X_i:\ X_i\in \Gamma, X_i\notin {\cal X}_1\biggl\},\ \ \ 
{\cal X}_3 = \biggl\{ X_i \in \Gamma^c \biggl\}.
\end{equation}
Note that ${\cal X}_1$ is precisely the set of sample modes.
Theorem \ref{thm::oracle-theorem}, below, shows that
$\delta(X_j)$ is bounded away from 0 for the points in ${\cal X}_1$
(and hence they lie above the threshold function)
while $\delta(X_i)$ lies below the threshold functions for all
$X_i$ in ${\cal X}_2$ and $X_i$ in ${\cal X}_3$.

{\bf Remark:}
{\em If the assumption that $p(x)\geq a >0$ is dropped,
then ${\cal X}_3$ needs to be re-defined as
${\cal X}_3 = \bigl\{ X_i \in \Gamma^c \bigr\}\bigcap 
\bigl\{X_i:p(X_i)\geq  n^{-1/(d+2)} \bigr\}$.}

\begin{theorem} \label{thm::oracle-theorem}

(i) Let $X_j \in {\cal X}_1$ be the sample mode in $B(m_j,\psi_{n,j})$. 
Then $p(X_j) = p(m_j) + O_P(\epsilon_n^2)$ and
$\delta(X_j)/t_n(p(X_j))\to \infty$.

(ii) For all $X_i \in {\cal X}_2$, we have
$\delta(X_i) \leq  t_n(p(X_i))$.

(iii) $P^n\Bigl(\delta(X_i) \leq t_n(p(X_i))\ 
{\rm for\ all\ }X_i \in {\cal X}_3\Bigr)\to 1$.
\end{theorem}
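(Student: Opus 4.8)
The plan is to handle each of the three parts by producing, for a non--sample--mode point, a nearby sample point of strictly larger density (with ``nearby'' measured against $t_n$), and to treat the sample modes directly via Theorem~\ref{thm::sample-mode}. Part (i) I would get from Theorem~\ref{thm::sample-mode} plus a Taylor expansion: on the (probability--$\to1$) event that each sample mode $X_j$ lies in $B(m_j,\psi_{n,j})$, expanding $p$ to second order about $m_j$, using $g(m_j)=0$ and $-\Lambda_j\le\lambda_{\min}(H(x))\le\lambda_{\max}(H(x))\le-\lambda_j/2$ for $x\in B(m_j,\omega_j)$, gives $0\le p(m_j)-p(X_j)\le\tfrac12\Lambda_j\psi_{n,j}^2=O(\epsilon_n^2)$, which is the first assertion; and Theorem~\ref{thm::sample-mode}(ii) gives $\delta(X_j)\ge\omega_j/2$, while $p(X_j)\to p(m_j)\ge a$ forces $t_n(p(X_j))\le(C\log n/(na))^{1/d}\to0$, so $\delta(X_j)/t_n(p(X_j))\to\infty$.

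For part (ii), let $X_i\in{\cal X}_2$. On the same event, and for $n$ large (so the $B(m_j,\psi_n)$ are disjoint and contained in the $B(m_j,\omega_j)$), $X_i$ lies in a unique $B(m_j,\psi_n)$; the sample mode $X_j$ of that peak satisfies $X_j\in B(m_j,\psi_{n,j})\subseteq B(m_j,\psi_n)$ and $p(X_j)\ge p(X_i)$, strictly so a.s.\ since the values $p(X_i)$ are a.s.\ distinct. Hence $\delta(X_i)\le\|X_i-X_j\|\le\psi_n+\psi_{n,j}\le2\psi_n$. Since the global maximum of $p$ is attained at a mode, $p(X_i)\le\max_\ell p(m_\ell)$, so $t_n(p(X_i))\ge(C\log n/(n\max_\ell p(m_\ell)))^{1/d}$; both $2\psi_n$ and this lower bound are of order $(\log n/n)^{1/d}$, and comparing them shows that the constant $C$ in (\ref{eq::C}) is chosen exactly so that $2\psi_n\le t_n(p(X_i))$ --- which is the reason for that particular form of $C$.

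Part (iii) is the substantive step. Fix $i$ and restrict to $\{X_i\in\Gamma^c\}$, so $\min_\ell\|X_i-m_\ell\|\ge\psi_n$ (indeed $>\psi_n$ a.s.), and $p(X_i)\ge a$ since $X_i$ lies in the support a.s. It suffices to find one other sample point in $B(X_i,t_n(p(X_i)))$ of strictly larger density, and for this I would look in the smaller ball $B(X_i,s_i)$ with $s_i=\min\{t_n(p(X_i)),\psi_n\}\le t_n(p(X_i))$: since $s_i\le\psi_n\le\min_\ell\|X_i-m_\ell\|$ and $s_i<\epsilon_0$ for $n$ large, (A4) applies at $x=X_i$ and scale $s_i$, giving $\mu(L_{X_i}\cap B(X_i,s_i))\ge\tau v_d s_i^{\,d}$ with $L_{X_i}=\{p>p(X_i)\}\subseteq\{p>a\}$, whence $P(X_1\in L_{X_i}\cap B(X_i,s_i)\mid X_i)\ge a\tau v_d s_i^{\,d}$. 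Conditioning on $X_i$ and using that the remaining $n-1$ points are i.i.d.\ and independent of $X_i$,
\[
P\big(X_i\in\Gamma^c,\ \delta(X_i)>t_n(p(X_i))\big)\le
\mathbb{E}\big[\mathbf{1}\{X_i\in\Gamma^c\}\,(1-a\tau v_d s_i^{\,d})^{n-1}\big]\le
\sup_{x\in\Gamma^c}e^{-a\tau v_d s_x^{\,d}(n-1)},
\]
where $s_x=\min\{t_n(p(x)),\psi_n\}$. Because $n\,s_x^{\,d}=\min\{C\log n/p(x),\,n\psi_n^{\,d}\}\ge c_0\log n$ with $c_0=\min\{C/\max_\ell p(m_\ell),\ \max_j(2G^2\Lambda_j/\lambda_j)^{d/2}r\}>0$, the right side is at most $n^{-a\tau v_d c_0(1+o(1))}$; a union bound over $i=1,\dots,n$ then bounds the probability in the statement by $n^{1-a\tau v_d c_0+o(1)}$, which tends to $0$ provided $a\tau v_d c_0>1$ --- and securing that inequality is precisely the purpose of the lower bounds on $C$, $G$ and $r$ in (\ref{eq::C})--(\ref{eq::G}) together with $r>1/(av_d)$.

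I expect the real work, and the only genuine obstacle, to be in part (iii): (a) applying (A4) at the right scale --- the truncation $s_i=\min\{t_n(p(X_i)),\psi_n\}$ is forced because $t_n(p(X_i))$ can exceed $\psi_n$ when $p(X_i)$ is close to $a$, in which case $X_i$ need not be far enough from the modes for (A4) to apply at scale $t_n(p(X_i))$; (b) the dependence, since $\delta(X_i)$ and the target set $L_{X_i}$ both depend on $X_i$ and on the configuration, which dictates the conditioning argument above; and (c) tracking the numerical constants so that the exponent in the final union bound strictly exceeds one, which is what pins down the otherwise opaque constants $C$, $G$, $r$. Parts (i) and (ii) are routine given Theorem~\ref{thm::sample-mode}, modulo the same constant--chasing in (ii).
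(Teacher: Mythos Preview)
Your proposal is correct and follows essentially the same route as the paper: Theorem~\ref{thm::sample-mode} plus a second-order expansion for part~(i); in part~(ii) the sample mode $X_j$ in the same peak witnesses $\delta(X_i)\le O(\psi_n)$, and the constant $C$ is calibrated so that this is $\le t_n(p(X_i))$; and in part~(iii) one applies (A4) at a scale not exceeding $\psi_n$, bounds the probability that no other sample point falls in $L_{X_i}\cap B(X_i,\psi_n)$ by $n^{-c}$, and finishes with a union bound.

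The one simplification the paper exploits, which you slightly obscure, is this: under the lower bound on $C$ in (\ref{eq::C}), one has $t_n(p(x))\ge\psi_n$ for \emph{every} $x$ in the support (since $p(x)\le\max_j p(m_j)$). Consequently your truncation $s_i=\min\{t_n(p(X_i)),\psi_n\}$ always equals $\psi_n$, and the concern you flag in (a) --- that $t_n(p(X_i))$ might exceed the distance from $X_i$ to the modes --- is resolved not by truncation but by the very choice of $C$: one simply works at the fixed scale $\psi_n$ throughout part~(iii), and the inequality $\psi_n\le t_n(p(X_i))$ is what converts the conclusion $\delta(X_i)\le\psi_n$ into $\delta(X_i)\le t_n(p(X_i))$. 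This is also the link between parts~(ii) and~(iii): both produce $\delta\lesssim\psi_n$, and the constant $C$ is chosen once so that $\psi_n\le t_n$. The paper is somewhat informal in treating $x$ as deterministic in the probability calculation; your explicit conditioning on $X_i$ and use of the remaining $n-1$ points is the cleaner way to say the same thing. In part~(ii), your bound $\delta(X_i)\le 2\psi_n$ is safe; the paper writes $\delta(X_i)\le\psi_{n,j}$, which matches the constant in (\ref{eq::C}) exactly but relies on both $X_i$ and $X_j$ lying in $B(m_j,\psi_{n,j})$ rather than the slightly larger $B(m_j,\psi_n)$.
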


\subsection{The Limiting Distribution}

To get more information about the shape of the mode diagram
we show that for any $x$ that is not a mode,
the distribution of $n\delta^d(x)$ 
only depends on $p(x)$ and converges to an
exponential random variable with mean $1/(p(x)\,\tau\, v_d)$
where $v_d$ is the volume of the unit ball.
This means that $\delta(x) \approx (n \,\tau\, v_d\, p(x))^{-1} E$
where $E\sim {\rm Exp}(1)$ and that a plot of $\log \delta(X_i)$ versus
$\log p(X_i)$ should look linear
for all $X_i$'s not close to a mode.
On the other hand if $x$ is a mode, then
$n\delta^d(x)\to \infty$.

We will need the following stronger version of (A4).
Recall that
$L_x = \{y:\ p(y) > p(x)\}$.

(A4') There exists $\tau\in (0,1)$ such that,
for any $x\notin {\cal M}$,
\begin{equation}
\lim_{t\to 0}\frac{\mu\Bigl(L_x \bigcap B(x,t)\Bigr)}{\mu(B(x,t))} = \tau.
\end{equation}

\newpage
\begin{theorem}
\label{theorem::exponential}
Suppose that (A1), (A2), (A3) and (A4') hold and that
$x$ is not a mode.
Then the random variable
$n\; \delta^d(x)$ converges in distribution to an
exponential random variable, with parameter $p(x)\tau\;v_d$.
If $x$ is a mode, then $n\delta^d(x)\to \infty$.
\end{theorem}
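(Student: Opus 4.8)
The plan is to compute the survival function $\P(n\delta^d(x) > t)$ directly, using that $X_1,\dots,X_n$ are i.i.d.\ and that a small ball around $x$ carries probability mass close to $p(x)$ times its Lebesgue measure. Fix $t > 0$ and put $r_n = (t/n)^{1/d}$, so that $\{n\delta^d(x) > t\} = \{\delta(x) > r_n\}$. By the definition of $\delta(x)$, the event $\{\delta(x) > r_n\}$ occurs precisely when no sample point lies in $B(x,r_n)\cap L_x$, where $L_x = \{y: p(y) > p(x)\}$. Since the $X_i$ are i.i.d.\ with density $p$,
\begin{equation}
\P\bigl(n\delta^d(x) > t\bigr) = (1 - q_n)^n, \qquad q_n := \int_{B(x,r_n)\cap L_x} p(y)\,dy
\end{equation}
(if one insists on excluding $x$ itself from the sample, the exponent becomes $n-1$, which does not affect the limit).

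Next I would pin down the asymptotics of $q_n$. Since $r_n \to 0$, continuity of $p$ from (A2) gives $\sup_{y\in B(x,r_n)} p(y) \to p(x)$ and $\inf_{y\in B(x,r_n)} p(y) \to p(x)$, so that $q_n = p(x)(1 + o(1))\,\mu\bigl(B(x,r_n)\cap L_x\bigr)$. By (A4$'$), $\mu(B(x,r_n)\cap L_x)/\mu(B(x,r_n)) \to \tau$; since $\mu(B(x,r_n)) = v_d r_n^d = v_d t/n$, this yields $\mu(B(x,r_n)\cap L_x) = \tau v_d (t/n)(1 + o(1))$, and therefore $n q_n \to p(x)\,\tau\,v_d\,t$, with in particular $q_n \to 0$.

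Finally, because $q_n \to 0$ we have $n\log(1 - q_n) = -n q_n + O(n q_n^2) \to -p(x)\tau v_d t$, and hence
\begin{equation}
\P\bigl(n\delta^d(x) > t\bigr) = (1-q_n)^n \longrightarrow e^{-p(x)\tau v_d t}.
\end{equation}
This is the survival function of an exponential variable with parameter $p(x)\tau v_d$; since the convergence holds for every $t > 0$ and the limiting c.d.f.\ is continuous, $n\delta^d(x)$ converges in distribution to $\mathrm{Exp}(p(x)\tau v_d)$. For the case $x\in{\cal M}$: since $p$ is Morse (A3), there is $\epsilon > 0$ with $p(y) < p(x)$ for all $y\in B(x,\epsilon)$ with $y\neq x$, so no sample point in $B(x,\epsilon)$ can have density exceeding $p(x)$; hence $\delta(x)\geq \epsilon$ surely and $n\delta^d(x)\geq n\epsilon^d\to\infty$.

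The argument is largely routine; the step that deserves care is the identity $q_n = p(x)(1+o(1))\,\mu(B(x,r_n)\cap L_x)$ combined with (A4$'$) --- one must check that the two $o(1)$ contributions (from the oscillation of $p$ near $x$ and from the measure-ratio limit) compose correctly and that $q_n\to 0$, so that the expansion $\log(1-q_n) = -q_n + O(q_n^2)$ is legitimate. A secondary, purely definitional subtlety is whether $x$ is regarded as one of the $X_i$, which only toggles the exponent between $n$ and $n-1$ and is thus immaterial.
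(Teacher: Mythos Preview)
Your argument is correct and follows essentially the same route as the paper: compute $\P(n\delta^d(x)>t)=(1-q_n)^n$ with $q_n=P(B(x,r_n)\cap L_x)$, use continuity of $p$ together with (A4$'$) to show $nq_n\to p(x)\tau v_d t$, and conclude via $(1-q_n)^n\to e^{-p(x)\tau v_d t}$; the paper invokes the mean value theorem to write $q_n=p(y_n)\mu(A_x)$ for some $y_n\to x$ rather than your sup/inf sandwich, but this is a cosmetic difference. Your treatment of the mode case is also the same idea (and slightly more explicit) as the paper's one-line remark that $\delta(x)$ is strictly positive.
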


{\bf Remark:}
{\em We can allow a different limit $\tau(x)$ for each $x$.
In this case, the limiting distribution is 
exponential with parameter $p(x)\tau(x)\;v_d$.}

\subsection{The Linear Heuristic}

The results in the previous sections
show that, for non-modes,
$\log \mathbb{E}[n\,\delta^d(X)]$ should be approximately 
linear in $\log p(x)$.
On the other hand, points closest to modes will lie far above the threshold.
This suggests the following approach:
plot $\log p(X_i)$ versus
$\log \delta(X_i)$.
Most points will fall below some line.
A few points will be above the line.
In Section \ref{section::threshold},
we will fit a robust linear regression to the log-mode diagram.
The outliers above the line will indicate the modes.
We pursue this idea in Section \ref{section::threshold}.

\section{The Estimated Mode Diagram}
\label{section::not-oracle}

Since $p$ is not known, we have to estimate the diagram.
Let $\hat p$ denote the kernel density estimator
and let
\begin{equation} \label{eqn::deltahat}
\hat\delta(X_i) = 
\min\Bigl\{ ||X_j - X_i||:\ \hat p(X_j) > \hat p(X_i)\Bigr\}.
\end{equation}
As before,
if there are no points with $\hat p(X_j) > \hat p(X_i)$
we define
$\delta(X_i)=L$
where $L$ is any positive constant.
The estimated diagram is 
\begin{equation}\label{eq::EstMode}
\hat{\cal D} = \Bigl\{ (\hat p(X_i),\hat\delta(X_i)):\ i=1,\ldots, n\Bigr\}.
\end{equation}

{\bf Remark.}
{\em An alternative approach to defining the estimated diagram is as follows.
We draw a bootstrap sample
$X_1^*,\ldots, X_N^*$ from $\hat p$.
We then define
\begin{equation}
\hat\delta(X_i^*) = 
\min\Bigl\{ ||X_j^* - X_i^*||:\ \hat p(X_j^*) > \hat p(X_i^*)\Bigr\}
\end{equation}
and
\begin{equation}
\hat{\cal D}^* = \Bigl\{ (\hat p(X_i^*),\hat\delta(X_i^*)):\ i=1,\ldots, N\Bigr\}.
\end{equation}
This approach has the advantage that we can take $N$ to be much larger
than $n$.
This gives a more accurate summary $\hat p$.
On the other hand, if $n$ is huge, we might even take $N$ smaller than $n$ 
to reduce computation.
At any rate, by sampling from $\hat p$ we have more control.
We will not pursue this bootstrap approach in this paper.}

The rest of the section is devoted to showing that
$\hat{\cal D}$
has the same behavior as the oracle diagram
in Theorem 2.
First, we recall some facts about $\hat p$.
Let ${\cal M} =\{m_1,\ldots, m_k\}$
denote the modes of $p$ and let
${\cal C} =\{c_1,\ldots, c_r\}$ denote the remaining critical points of 
$\hat{p}$. Let $\hat g$ be the gradient of $\hat p$ and
let 

\bigskip
$\hat H$ be the Hessian.
Then, with high probability, for all large $n$,
$\hat p$ is Morse the the same number of critical points as $p$.
This is summarized in the next result.

\begin{lemma}
\label{lemma::kernel}
Assume (A1)-(A3).
Take the bandwidth to be 
$h_n \asymp n^{-1/(d+6)}$.
Let 
\begin{equation}\label{eq::rn-and-sn}
r_n = a_1(\log n/n)^{2/(4+d)},\ \ \ 
s_n = a_2(1/n)^{2/(d+6)}
\end{equation}
where $a_1$ and $a_2$ are positive constants.
There exists a sequence of events ${\cal A}_n$ such that
$P^n({\cal A}_n)\to 1$ and such that, on ${\cal A}_n$:\\
(i) $\sup_{x\in {\cal X}}||\hat p(x) - p(x)|| \leq r_n$.\\
(ii) $\hat p$ is Morse and has exactly $k$ modes
$\hat m_1 , \ldots, \hat m_k$ with
$\max_j||\hat m_j - m_j|| \leq s_n$.\\
(iii) The remaining critical points 
$\hat {\cal C} =\{\hat c_1,\ldots, \hat c_r\}$ of $\hat p$ also satisfy
$\max_j||\hat c_j - c_j|| \leq s_n$.
(iv) $\sup_x||\hat g(x) - g(x)||_\infty= o_P(1)$ and
$\sup_x \max_{j,k}||\hat H_{jk}(x) - H_{jk}(x)||= o_P(1)$,
and the supremum of the third derivative is $o_P(1)$.
\end{lemma}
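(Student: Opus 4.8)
The plan is to produce $\mathcal{A}_n$ as the intersection of two high-probability events: one controlling the sup-norm errors of $\hat p$ and of its first three derivatives, and one on which the Morse structure of $\hat p$ matches that of $p$. Parts (i) and (iv) are then standard uniform-consistency statements for kernel estimators, and parts (ii)--(iii) follow from a quantitative Morse-stability argument that is \emph{driven} by (iv); accordingly I would establish (i) and (iv) first. Throughout, $\sup$ is over the compact support ${\cal X}$, and one implicitly needs a smoothness/tail condition on the kernel $K$ (e.g.\ $K$ Gaussian, or a compactly supported $C^3$ kernel of order $\ge 2$), which I would add to the hypotheses.

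For (i), split $\hat p - p = (\hat p - \E\hat p) + (\E\hat p - p)$. Under (A2) a second-order kernel gives, by a Taylor expansion with remainder controlled by the bounded third derivative, $\sup_x|\E\hat p(x) - p(x)| = O(h_n^2)$. The stochastic term $\sup_x|\hat p(x) - \E\hat p(x)|$ is handled by the uniform-in-bandwidth empirical-process bounds for kernel classes (Talagrand's inequality applied to the bounded VC class $\{\,h^{-d}K((x-\cdot)/h)\,\}$, as in the work of Gin\'e--Guillou and Einmahl--Mason), yielding $O_P(\sqrt{\log n/(n h_n^d)})$. With $h_n \asymp n^{-1/(d+6)}$ both terms are $o(r_n)$, so on the event where the empirical-process bound holds, $\sup_x|\hat p(x)-p(x)| \le r_n$. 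For (iv), observe that $\hat g$, $\hat H$, and the third derivative of $\hat p$ are themselves kernel estimators, with kernels $h^{-1}K'$, $h^{-2}K''$, $h^{-3}K'''$ (coordinatewise); the same argument gives $\sup_x\|\hat g(x)-g(x)\|_\infty = O_P\!\big(h_n^2 + \sqrt{\log n/(n h_n^{d+2})}\big)$, and analogously for the Hessian and third derivative, each of which is $o_P(1)$ for this $h_n$ (for the third derivative one strictly needs $n h_n^{d+6}/\log n\to\infty$, which forces a bandwidth marginally larger than $n^{-1/(d+6)}$ or, if only $O_P(1)$ control of the third derivative is actually used downstream, can be left as stated).

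For (ii) and (iii), condition on the event ${\cal B}_n$ of (iv), with $P^n({\cal B}_n)\to 1$. First, outside the union of small balls $B(c,\rho)$ around the finitely many critical points $c$ of $p$, compactness and continuity of $g$ give $\inf\|g\| > 0$, so $\|\hat g\| > 0$ there on ${\cal B}_n$ for all large $n$: no spurious critical points are created. Near a fixed critical point $c$, (A3) makes $H(c)$ non-singular, so a Taylor expansion gives $\|g(x)\| \ge (1/2)\,\sigma_{\min}(H(c))\,\|x-c\|$ on $B(c,\rho)$ for $\rho$ small; hence any zero $\hat c$ of $\hat g$ inside $B(c,\rho)$ obeys $\|\hat c-c\| \le 2\|\hat g-g\|_\infty/\sigma_{\min}(H(c))$, which is $O_P$ of the gradient rate and in particular at most $s_n$ for suitable $a_2$ (up to the $\sqrt{\log n}$ factor noted above). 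Existence of such a zero is a degree/homotopy argument: $g$ does not vanish on $\partial B(c,\rho)$ and has nonzero topological degree there, while on ${\cal B}_n$ one has $\|\hat g - g\|_\infty < \inf_{\partial B(c,\rho)}\|g\|$, so $\deg(\hat g, B(c,\rho),0) = \deg(g,B(c,\rho),0)\ne 0$ and $\hat g$ vanishes inside. Uniqueness follows because $\hat H$ is uniformly within $o_P(1)$ of the non-singular $H$, so the inverse-function-theorem estimate makes $\hat g$ injective on $B(c,\rho)$. Finally, $\hat H(\hat c)$ is within $o_P(1)$ of the non-degenerate $H(c)$, so by continuity of eigenvalues $\hat H(\hat c)$ has the same inertia as $H(c)$; in particular $\hat c$ is a mode of $\hat p$ iff $c$ is a mode of $p$, giving exactly $k$ modes $\hat m_1,\dots,\hat m_k$ and $r$ remaining critical points, each paired with and within $s_n$ of its counterpart for $p$. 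Setting ${\cal A}_n = {\cal B}_n \cap\{\text{(i) holds}\}$ finishes the proof; this part parallels the critical-point stability arguments in \cite{genovese2016non}.

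The main obstacle I anticipate is the Morse-stability step: obtaining existence, uniqueness, and the explicit localization rate $s_n$ for all $k+r$ perturbed critical points \emph{simultaneously on one event}, which requires the degree and inverse-function estimates to be made uniform over the finitely many critical points and the gradient uniform rate under $h_n\asymp n^{-1/(d+6)}$ to be reconciled with $s_n$ (modulo the $\sqrt{\log n}$ discrepancy, which one either absorbs into the constant $a_2$ for each fixed $n$ or removes with a slightly inflated bandwidth). The uniform-consistency ingredients for (i) and (iv) are essentially off-the-shelf given a sufficiently smooth kernel, the only care there being to track the powers of $h_n$ for each derivative order.
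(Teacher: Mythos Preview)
The paper does not prove this lemma at all: immediately after the statement it writes ``For proofs of these facts, see \cite{genovese2016non} and \cite{chazal2017robust}.'' Your outline --- uniform empirical-process bounds (\`a la Gin\'e--Guillou / Einmahl--Mason) for $\hat p$ and its derivatives, followed by a degree/inverse-function-theorem Morse-stability argument to localize the critical points and match their indices --- is exactly the content of those references, so there is nothing substantive to compare.

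One point worth flagging, in the same spirit as your third-derivative caveat: with $h_n\asymp n^{-1/(d+6)}$ the bias of $\hat p$ is of order $h_n^2\asymp n^{-2/(d+6)}$, which is \emph{larger} than $r_n=(\log n/n)^{2/(d+4)}$; the rate $r_n$ is the optimal sup-norm rate for $\hat p$, attained at the smaller bandwidth $(\log n/n)^{1/(d+4)}$, not at the mode-optimal $n^{-1/(d+6)}$. So your sentence ``both terms are $o(r_n)$'' is not correct as written. This is a mismatch in the lemma's stated rates rather than a flaw in your method, and it is harmless downstream since the paper only ever uses that $\sup_x|\hat p-p|\to 0$ and that the critical-point structure is preserved.
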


For proofs of these facts, see \cite{genovese2016non} and
\cite{chazal2017robust}.
In what follows, we assume that the event ${\cal A}_n$ holds.
In particular, $\hat p$ is Morse.
In what follows,
we refer to positive constants
$c_1, c_2$ which come from
Lemma \ref{lemma::killer}
in the appendix.

As in the previous section,
we may find constants $\omega_j$ such that
the balls $B_j=B(\hat m_j,\omega_j)$ are disjoint
and each contains at least one data point.
For $j=1,\ldots, k$ let
$X_j = \argmin_{X_i\in B_j}\hat p(X_i)$.
Let ${\cal X}= \{X_1,\ldots,X_n\}$.
Define
$$
{\cal X}_1 = \{X_1,\ldots, X_k\},\ \ 
{\cal X}_2 = \Biggl\{ X_i:\ X_i \in \bigcup_{j=1}^k B(\hat m_j,c_1 \epsilon_n)\Biggr\},\ \ 
{\cal X}_3 = {\cal X}- ( {\cal X}_1 \bigcup {\cal X}_2).
$$

As before
define $t_n(\hat p(x)) = (C \log n/(n \hat p(x)))^{1/d}$.
In what follows, we sometimes write
$t_n(x)$ as short for
$t_n(\hat p(x))$.
The behavior of the diagram
in this case is essentially the same as the oracle diagram
as the next result shows.
The proof is much more complicated since
$\hat p$ is a random function and is obviously
correlated with the data.

\begin{theorem}
\label{thm::about-threshold}
Let $t_n$ be defined as in \eqref{eqn::tn}.
Then:

(i) There exists $c>0$ such that, with probability tending one,
$\hat\delta(X_i)\geq c$ for all $X_i \in {\cal X}_1$.
Hence, $\hat\delta(X_i) > t_n(\hat p(X_i))$.

(ii) $\hat\delta(X_i) \leq c_2 t_n(\hat p(X_i))$ for all 
     $X_i\in {\cal X}_2$.

(iii) For ${\cal X}_3$ we have that
$$
P^n\Bigl( \hat\delta(X_i) \leq c_2 t_n(\hat p(X_i))\ 
        {\rm for\ all\ }X_i \in  {\cal X}_3\Bigr)\to 1.
$$
\end{theorem}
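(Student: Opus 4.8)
The plan is to condition on the event ${\cal A}_n$ of Lemma \ref{lemma::kernel} and then re-run the oracle arguments of Theorems \ref{thm::sample-mode}--\ref{thm::oracle-theorem} with $p$ replaced by $\hat p$, tracking the extra error terms. On ${\cal A}_n$ one has $\sup_{x\in{\cal X}}|\hat p(x)-p(x)|\le r_n\to 0$, $\hat p$ is Morse with exactly $k$ modes $\hat m_1,\dots,\hat m_k$ satisfying $\|\hat m_j-m_j\|\le s_n$, and the gradient, Hessian and third derivative of $\hat p$ converge uniformly to those of $p$. Since $P^n({\cal A}_n)\to 1$, it is enough to establish (i) and (ii) deterministically on ${\cal A}_n$ (for all large $n$) and (iii) with conditional probability tending to one given ${\cal A}_n$. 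Below, $t_n$ denotes the threshold $t_n(u)=(C\log n/(nu))^{1/d}$ of \eqref{eq::tn-def}, which is the one relevant for the bounds that follow; I read the reference in the statement as this deterministic threshold.

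For (i), let $X_j$ be the sample mode in $B_j=B(\hat m_j,\omega_j)$. By Lemma \ref{lemma::kernel}(iv) and continuity of the Hessian, for $n$ large the matrices $\hat J(x)=-\hat H(x)$ satisfy $\lambda_j/2\le\lambda_{\min}(\hat J(x))\le\lambda_{\max}(\hat J(x))\le 2\Lambda_j$ on $B_j$. A second-order Taylor expansion of $\hat p$ at $\hat m_j$ (using $\hat g(\hat m_j)=0$) then gives, exactly as in Theorem \ref{thm::sample-mode}(ii), that every sample point $X_i$ with $\hat p(X_i)>\hat p(X_j)$ lies outside $B(\hat m_j,\omega_j/2)$; since $X_j\in B(\hat m_j,o(1))$ this forces $\hat\delta(X_j)$ to exceed a fixed positive constant $c$ for every $X_j\in{\cal X}_1$ (e.g.\ $c=\tfrac13\min_j\omega_j$, for $n$ large). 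Because $\hat p\ge a-r_n>0$ on ${\cal A}_n$, we also get $t_n(\hat p(X_i))\le(C\log n/(n(a-r_n)))^{1/d}\to 0$, so $\hat\delta(X_i)>t_n(\hat p(X_i))$ for all $X_i\in{\cal X}_1$ once $n$ is large.

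For (ii), if $X_i\in{\cal X}_2\subseteq\bigcup_j B(\hat m_j,c_1\epsilon_n)$ with $X_i\notin{\cal X}_1$, let $B(\hat m_l,c_1\epsilon_n)$ be the ball containing $X_i$ and $X_l$ the sample mode in $B_l$. The same Taylor/eigenvalue bound forces $X_l\in B(\hat m_l,O(\epsilon_n))$, so $X_i$ and $X_l$ both lie within $O(\epsilon_n)$ of $\hat m_l$, and since generically $\hat p(X_l)>\hat p(X_i)$ (no ties, a.s.) we get $\hat\delta(X_i)\le\|X_i-X_l\|=O(\epsilon_n)$; as $\hat p(X_i)\le\max_j p(m_j)+r_n$, this is $\le c_2 t_n(\hat p(X_i))$ for $c_2$ large enough, uniformly over ${\cal X}_2$. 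For (iii), fix $X_i\in{\cal X}_3$; on ${\cal A}_n$, $\min_j\|X_i-\hat m_j\|>c_1\epsilon_n$ together with $\|\hat m_j-m_j\|\le s_n$ gives $\min_j\|X_i-m_j\|>c_1\epsilon_n-s_n$. The idea is to pass to the deterministic set $\tilde R_i:=\{p>p(X_i)+2r_n\}\cap B(X_i,(1-\eta_n)t_n(p(X_i)))$, where $\eta_n:=r_n/(da)\to 0$: on ${\cal A}_n$ one checks $\tilde R_i\subseteq\{\hat p>\hat p(X_i)\}\cap B(X_i,t_n(\hat p(X_i)))$, and crucially $\tilde R_i$ is a function of $X_i$ alone. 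By the perturbed-standardness estimate of Lemma \ref{lemma::killer} --- which is precisely why ${\cal X}_2$ is excised at the scale $c_1\epsilon_n$, where the standardness lower bound degrades near the modes --- $\mu(\tilde R_i)\ge\tau'\,v_d\,((1-\eta_n)t_n(p(X_i)))^d$, so $P(\tilde R_i)\ge\kappa\log n/n$ for a constant $\kappa>1$ once $n$ is large; the constants $C,G,r$ in \eqref{eq::C}--\eqref{eq::G} are cooked precisely to make $\kappa>1$. Conditioning on $X_i$, the other $X_m$ are i.i.d.\ $P$, hence $P(X_m\notin\tilde R_i\ \forall m\ne i\mid X_i)\le(1-\kappa\log n/n)^{n-1}\le n^{-\kappa(1-o(1))}$, and a union bound over $i=1,\dots,n$ leaves failure probability $O(n^{1-\kappa})\to 0$. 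Off this failure event, each $X_i\in{\cal X}_3$ has some $X_m\in\tilde R_i$ with $\hat p(X_m)>\hat p(X_i)$ and $\|X_m-X_i\|\le t_n(\hat p(X_i))$, so $\hat\delta(X_i)\le t_n(\hat p(X_i))\le c_2 t_n(\hat p(X_i))$.

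The hard part will be part (iii): because $\hat p$ is random and correlated with $X_1,\dots,X_n$, the event ``$X_m$ lies in the $\hat p$-superlevel region of $X_i$'' is a priori not a function of finitely many coordinates. Conditioning on ${\cal A}_n$ is what unlocks the argument --- on that event every statement about $\hat p$ collapses to the deterministic bounds $\|\hat p-p\|_\infty\le r_n$ together with the spectral control of $\hat H$, which is exactly what lets one sandwich the $\hat p$-region between the data-only sets $\tilde R_i$ and then run the elementary union bound over the $n$ sample points. The remaining delicate point, carried by Lemma \ref{lemma::killer}, is choosing $c_1$ (and then $c_2$) so that the radius-$c_1\epsilon_n$ buffer ${\cal X}_2$ both covers the zone near the modes where standardness fails and is small enough to lie inside every relevant $t_n$-ball; this forces $c_1^d$ to exceed a fixed multiple of $C$.
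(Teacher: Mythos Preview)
Your treatment of (i) and (ii) is fine and matches the paper's outline: on ${\cal A}_n$, replace $p$ by $\hat p$ in the oracle arguments of Theorems~\ref{thm::sample-mode}--\ref{thm::oracle-theorem}, using Lemma~\ref{lemma::kernel}(iv) for spectral control of $\hat H$.

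For (iii), however, your route diverges from the paper's and has a genuine gap. You try to reduce the random superlevel set $\{\hat p>\hat p(X_i)\}$ to the deterministic set $\tilde R_i=\{p>p(X_i)+2r_n\}\cap B(X_i,(1-\eta_n)t_n(p(X_i)))$, and then claim a standardness-type volume lower bound $\mu(\tilde R_i)\ge \tau' v_d((1-\eta_n)t_n(p(X_i)))^d$. Two problems. First, Lemma~\ref{lemma::killer} is not a ``perturbed-standardness estimate''; it is a deterministic geometric lemma that, for any Morse $p$ and any $x$ outside the $c_1\epsilon_n$-neighbourhoods of the modes, constructs a ball $B$ of radius $2\epsilon_n$ contained in $\{p>p(x)\}$, sitting within distance $c_2 t_n(x)$ of $x$. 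It says nothing about the set $\{p>p(x)+2r_n\}$. Second, and more seriously, your volume bound actually fails near non-mode critical points (saddles, local minima). If $X_i$ is within $O(\epsilon_n)$ of a saddle $s_j$, then $\|g(X_i)\|=O(\epsilon_n)$ and the total variation of $p$ over $B(X_i,t_n(p(X_i)))$ is only $O(\epsilon_n^2)$, since $t_n\asymp\epsilon_n$. But the paper's rates give $r_n\asymp(\log n/n)^{2/(4+d)}$ while $\epsilon_n^2\asymp(\log n/n)^{2/d}$, so $r_n/\epsilon_n^2\to\infty$; hence $\{p>p(X_i)+2r_n\}\cap B(X_i,t_n(p(X_i)))$ can be empty. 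The $2r_n$ buffer that you need to pass from $\hat p$ back to $p$ is simply too large on the scale of the ball $t_n$ near saddles, and no choice of $c_1,c_2$ fixes this.

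The paper handles (iii) by a different decoupling. Lemma~\ref{lemma::killer} is applied directly to the Morse function $\hat p$ (legitimate on ${\cal A}_n$), so that for every $X_i\in{\cal X}_3$ one obtains a deterministic (given $\hat p$) ball $B_i$ of radius $2\epsilon_n$ in $\{\hat p>\hat p(X_i)\}$, within distance $c_2 t_n(\hat p(X_i))$ of $X_i$ and not containing $X_i$. The randomness is then isolated in the separate, $\hat p$-free event of Lemma~\ref{lemma::covering}: with probability tending to one, every ball of radius $\epsilon_n$ in the covering contains a data point. Since $B_i$ has radius $2\epsilon_n$, it contains one of the covering balls, hence a data point $X_j\ne X_i$ with $\hat p(X_j)>\hat p(X_i)$, giving $\hat\delta(X_i)\le c_2 t_n(\hat p(X_i))$. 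The point is that Case~1 of Lemma~\ref{lemma::killer} uses the Hessian of $\hat p$ itself near each critical point, so no comparison to $p$ (and hence no $r_n$ buffer) is needed. If you want to salvage your argument, you would have to work with $\hat p$-superlevel sets directly rather than sandwiching by $p$, which is precisely what the covering-plus-Lemma~\ref{lemma::killer} device accomplishes.
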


\section{Choosing the Threshold Using Robust Regression}
\label{section::threshold}

We now know that both ${\cal D}$ and
$\hat{\cal D}$ have the following behavior.
There are $k$ points $X_1,\ldots, X_k$, corresponding to the $k$ modes,
such that
$\hat p(X_j)$ and $\hat{\delta}(X_j)$ are large.
For the remaining points,
$\hat\delta(X_i)$ is small.
Specifically, $\hat\delta(X_i) < t_n(\hat p(X_i)) = 
(C \log n/(n \hat p(X_i)))^{1/d}$.
In other words,
for non-modes,
the points
$(\log \hat\delta(X_i),\log \hat p(X_i))$
should fall on or below a line.

The modes could be selected visually by examining 
the log-log mode diagram.
Alternatively, if we perform a robust linear regression of
$\log \hat\delta(X_i)$ on $\log \hat p(X_i)$,
we expect the modes to show up as outliers.
Let $\hat\beta_0$ and $\hat\beta_1$ be the estimated intercept 
and slope from the regression.
Thus,
$$
\log \hat \delta(X_i) \approx \hat\beta_0 + \hat\beta_1 \log p (X_i).
$$

Now we look for large positive outliers.
These are points for which
$\log \delta(X_i) > \hat\beta_0 + \hat\beta_1 \log p(X_i) + M s$
where $s$ is the estimated residual standard deviation and 
$M$ is some large constant.
We set $M=3$ in the examples of this paper. 

\smallskip
{\bf Remark.}
{\em It is possible to define some post-processing diagnostics
to make sure that the claimed modes are, in fact, modes.
For example, if $X_j$ is declared a mode,
and ${\cal N}$ is a set of neighbors of $X_j$,
then we can check that
$\hat p(X_j) - \hat p(X_i)>0$
for $X_i \in {\cal N}$.}

\section{Examples}
\label{section::examples}

A first example illustrating the theory in the paper 
was presented in Figure \ref{fig::FIG3}. 
That picture shows a simple two-dimensional data set, with 
two well separated clusters. In Figure \ref{fig::FIG3}
the threshold function $t_n(\hat{p}(X_i))$ was obtained 
as described in Section \ref{section::threshold}.
A histogram of $\hat{\delta}(X_i)$ was also added for completeness.

The examples of this section consist of data with five or four 
clusters in two dimensions, a three-dimensional data-set with
four clusters, and a 15-dimensional data set with two clusters.
The data sets can be enriched with added
random noise. 

Two two-dimensional data sets are in Figure \ref{fig::newdata}. 
The left panel shows five separate clusters, with 
shapes that are parts of a {\em broken circle}. They
consist of 500 points. The data set in the right panel 
present 400 points clustered in the shape of four 
crescents. 
The density $p$, of the {\em broken circle} data, is
estimated with the kernel function $\hat{p}$ in 
\eqref{eq::kernel}. The left panel in Figure 
\ref{fig::ModeCircle} shows the robust regression 
line of $\log(\hat{p})$ versus $\log(\hat{\delta})$.
The plot of residuals is shown in the right panel. This figure
shows five outliers corresponding to 
the five modes in the {\em broken circle}.

%\begin{figure}
\begin{center}
\begin{tabular}{ll}
\includegraphics[scale=.35]{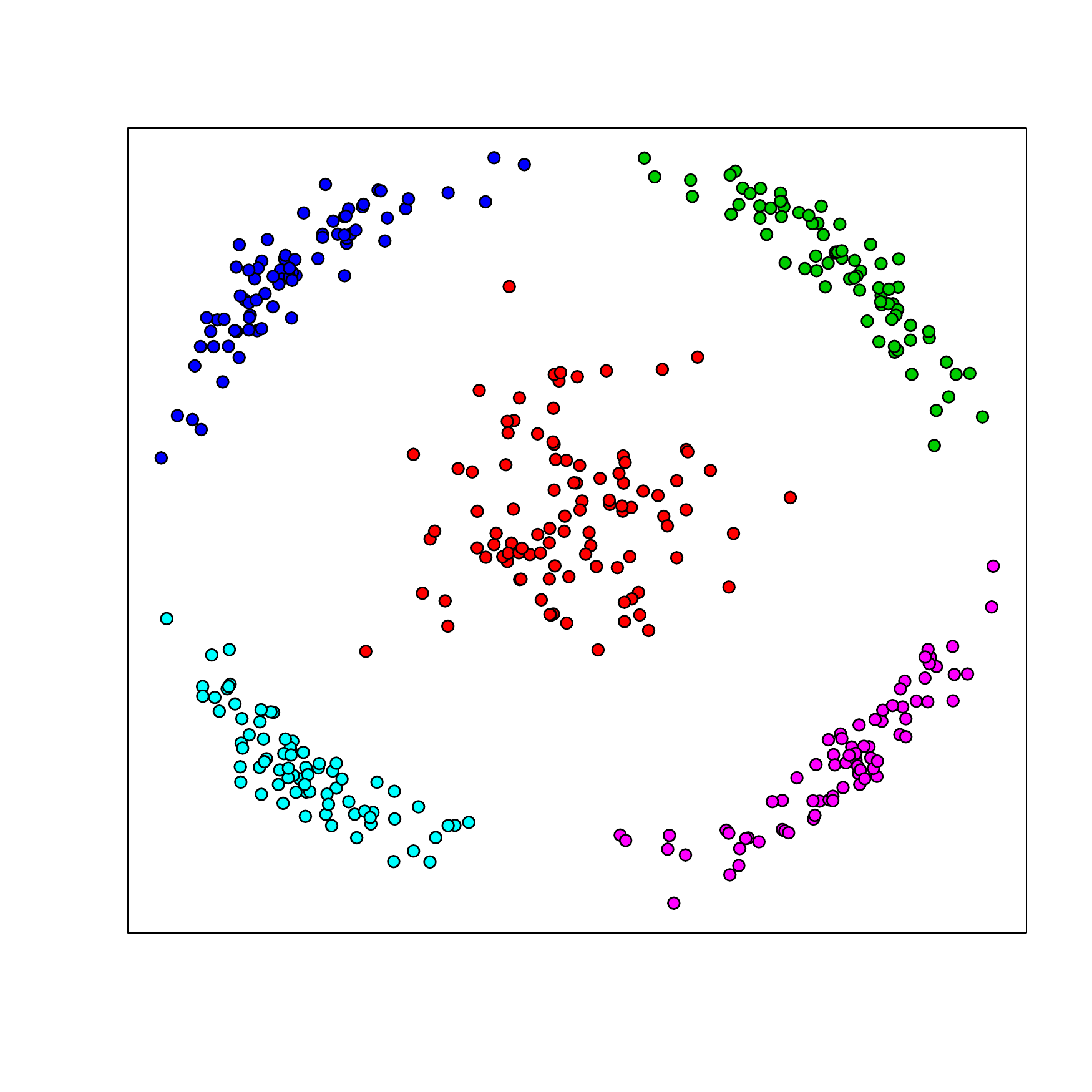} 
\includegraphics[scale=.35]{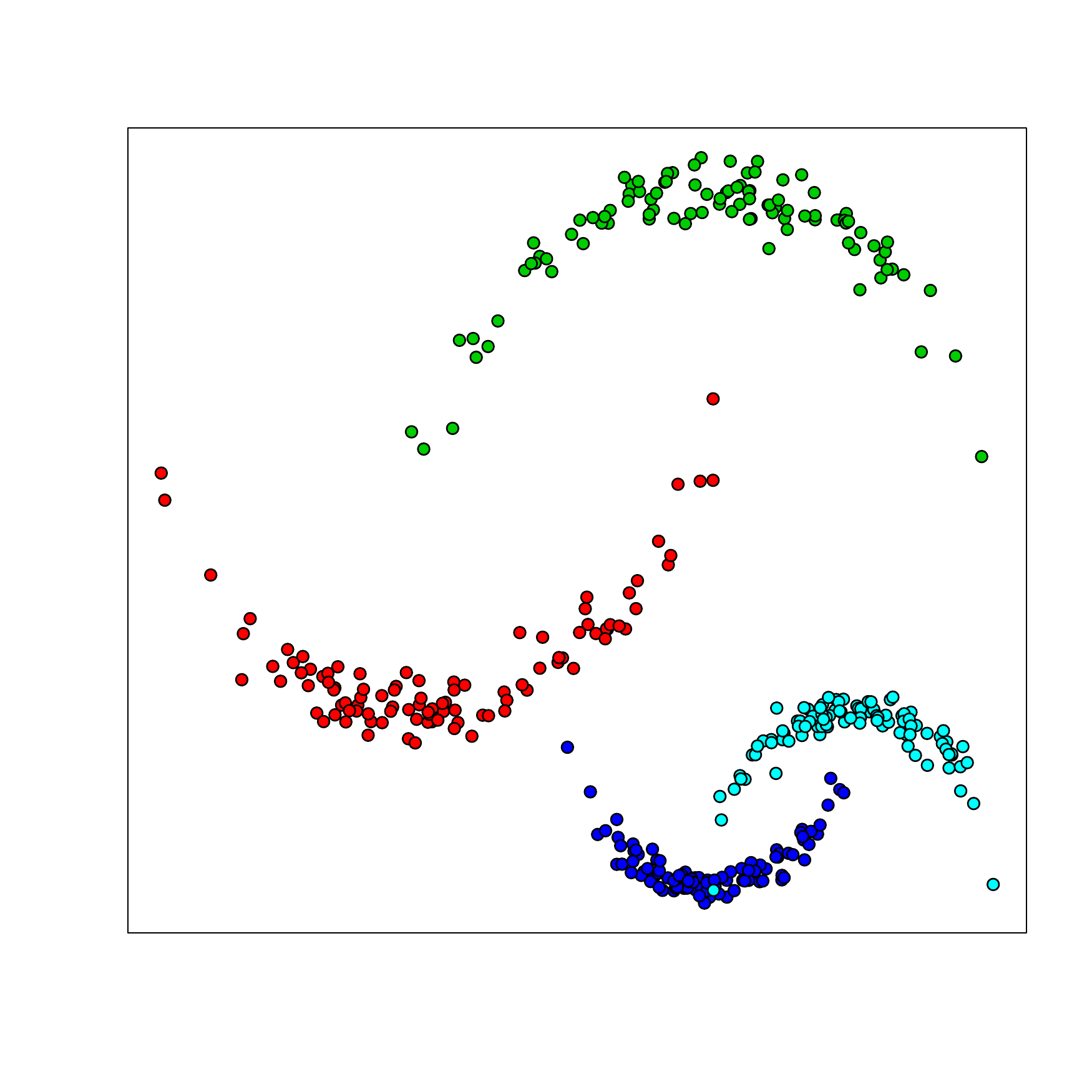} 
\end{tabular}
\end{center}
\begin{center}
\captionof{figure}{\em Left: broken circle data. 
Right: four crescents}
\label{fig::newdata}
\end{center}
%\end{figure}

%\begin{figure}
\begin{center}
\begin{tabular}{ll}
\includegraphics[scale=.35]{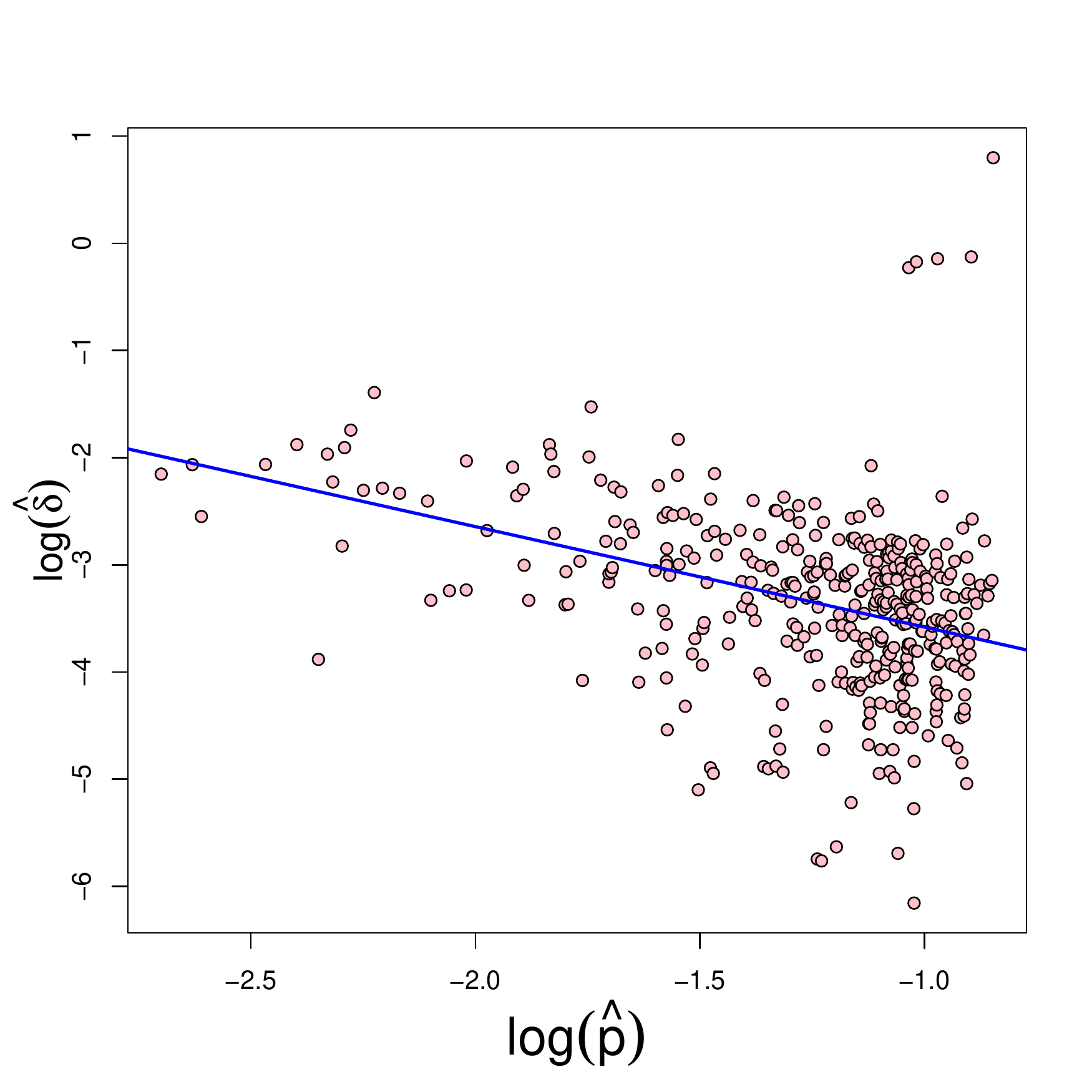} 
\includegraphics[scale=.35]{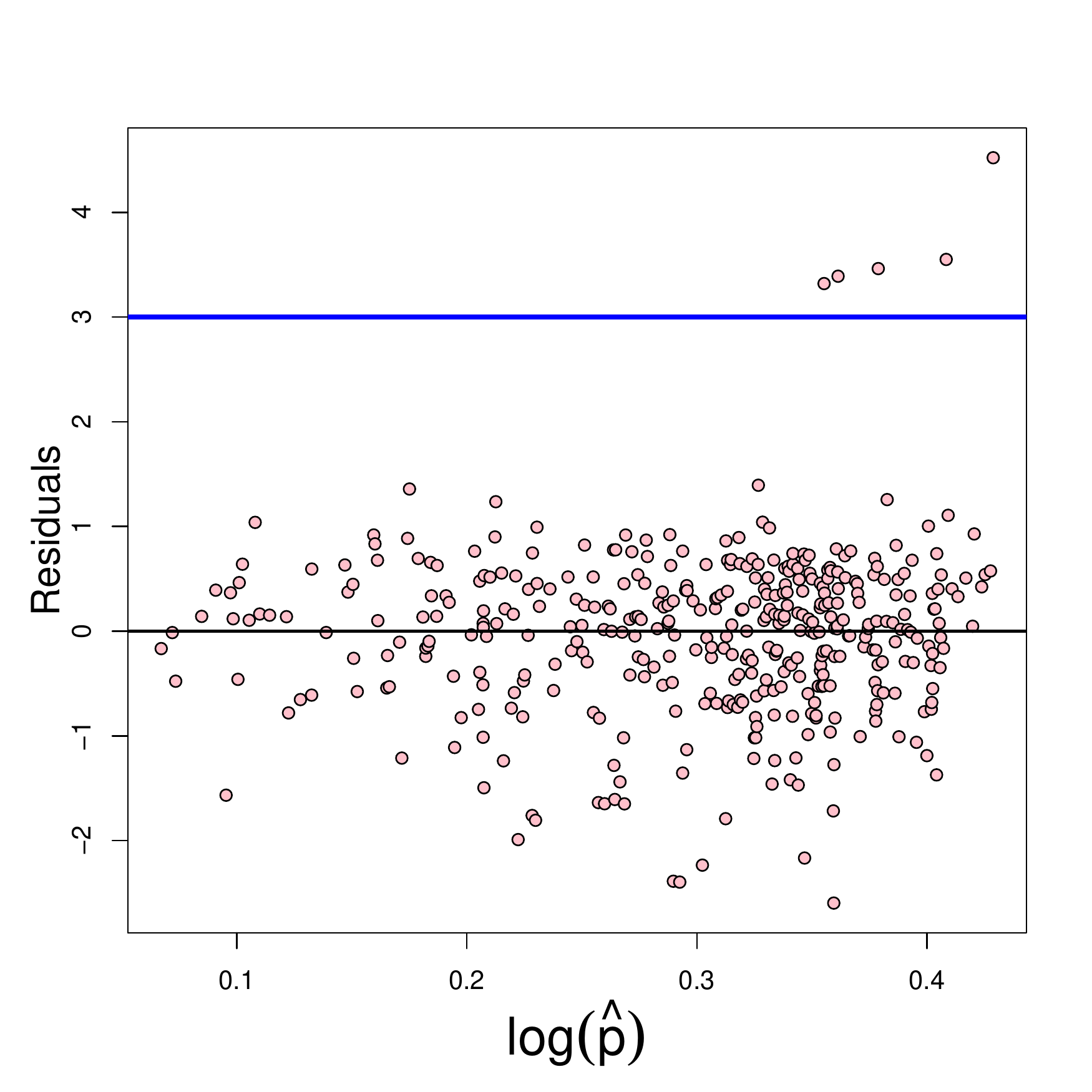} 
\end{tabular}
\end{center}
\begin{center}
\captionof{figure}{\em Broken Circle data. 
Left: Robust regression line for $log(\hat{p})$ vs. 
$\log(\hat{\delta})$}. Right: residuals from robust regression
\label{fig::ModeCircle}
\end{center}
%\end{figure}

\vspace{-.4in}

%\begin{figure}
\begin{center}
\begin{tabular}{ll}
\includegraphics[scale=.35]{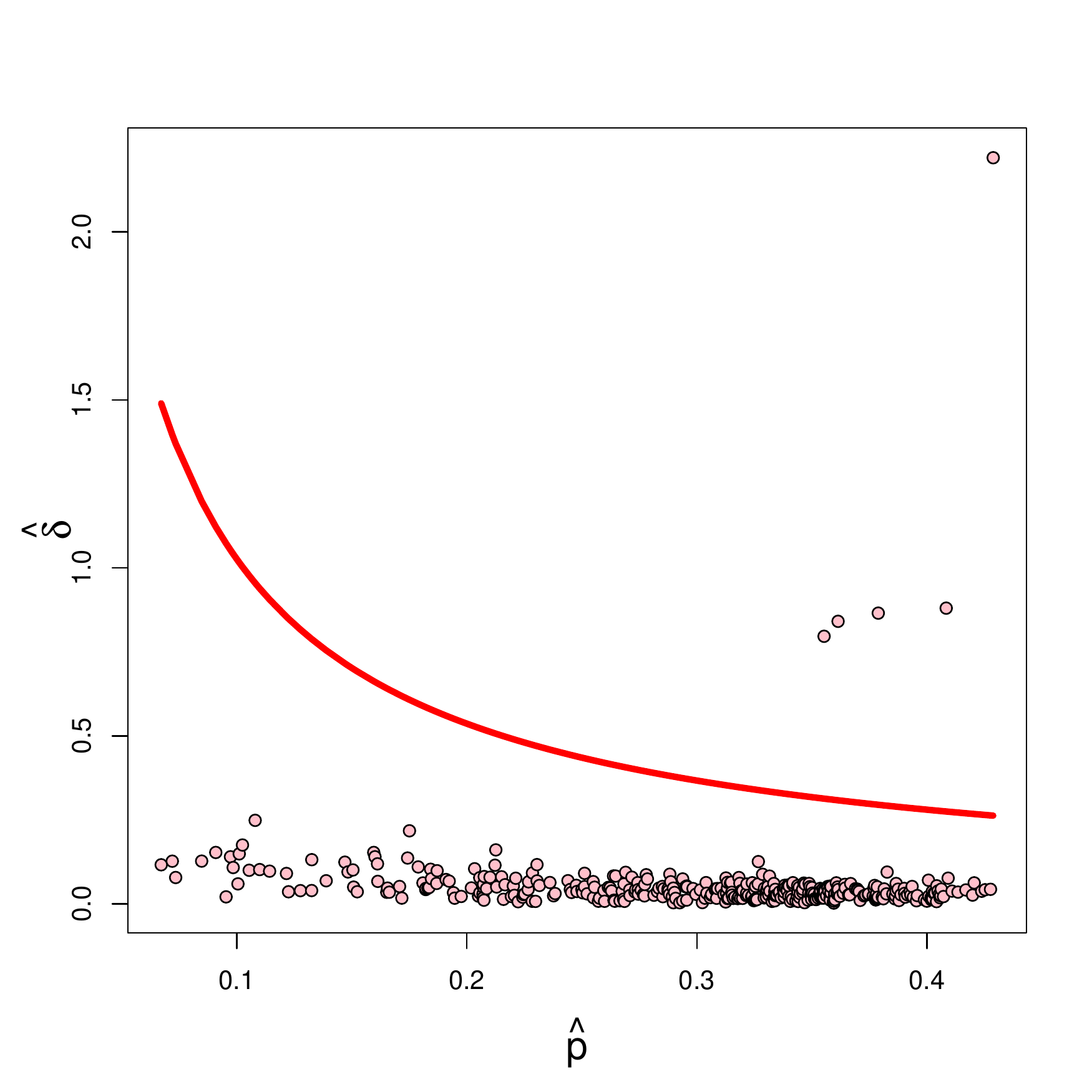} 
\includegraphics[scale=.35]{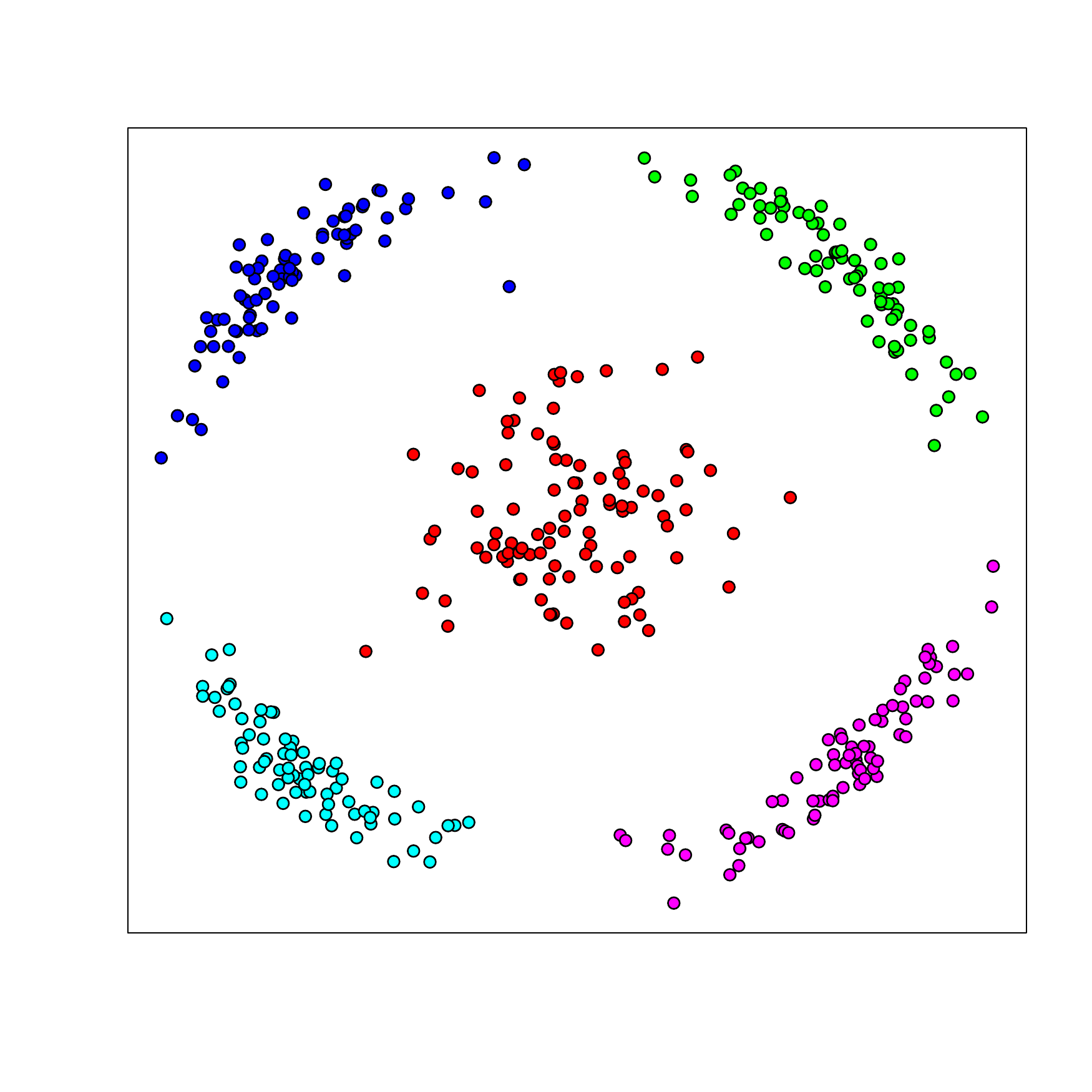} 
\end{tabular}
\end{center}
\begin{center}
\captionof{figure}{\em Broken Circle data. Left: Estimated 
Mode diagram $\cal{\hat{D}}$, with the threshold function.
Right: Estimated clusters.}
\label{fig::ALTRO}
\end{center}
%\end{figure}

\vspace{-.5cm}

The estimated mode diagram $\cal{\hat{D}}$ from 
\eqref{eqn::deltahat} and \eqref{eq::EstMode} is in
the the left panel of Figure \ref{fig::ALTRO}. It 
displays values of $\hat{p}(X_i)$ and $\hat{\delta}(X_i)$ 
for each data point $X_i$, as well as the threshold function 
$t_n(\hat{p}(X))$, obtained from the robust regression 
line of Figure \ref{fig::ModeCircle}.
The diagram clearly shows the five outliers at the top~corner 
above $t_n(\hat{p}(X))$. They are the
points with large values of $\hat{p}$ and $\hat{\delta}$ or
the modes of the density. Finally, the estimated 
clusters for the {\em broken circle} data, 
in the right panel of Figure \ref{fig::ALTRO}. 

Consider now the dataset in the left panel of 
Figure \ref{fig::CrescNoise}. It
consists of 400 points from the four crescent data of 
Figure \ref{fig::newdata}, augmented with 200 points of 
uniform random noise. Despite the added noise, the mode
diagram $\cal{\hat{D}}$, in the right panel, correctly
identified four modes.

\vspace{-.5in}
%\begin{figure}
\begin{center}
\begin{tabular}{ll}
\includegraphics[scale=.35]{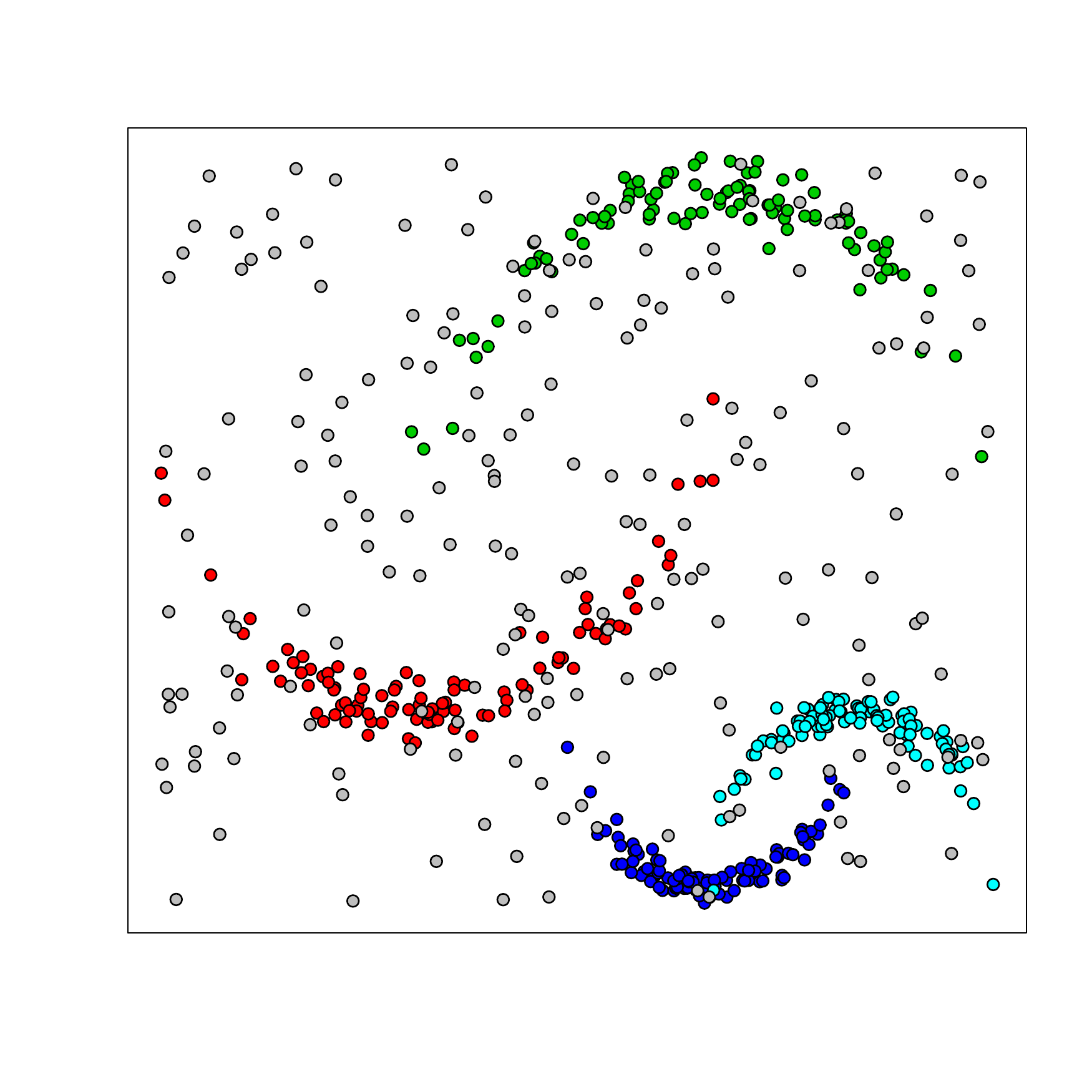} 
\includegraphics[scale=.35]{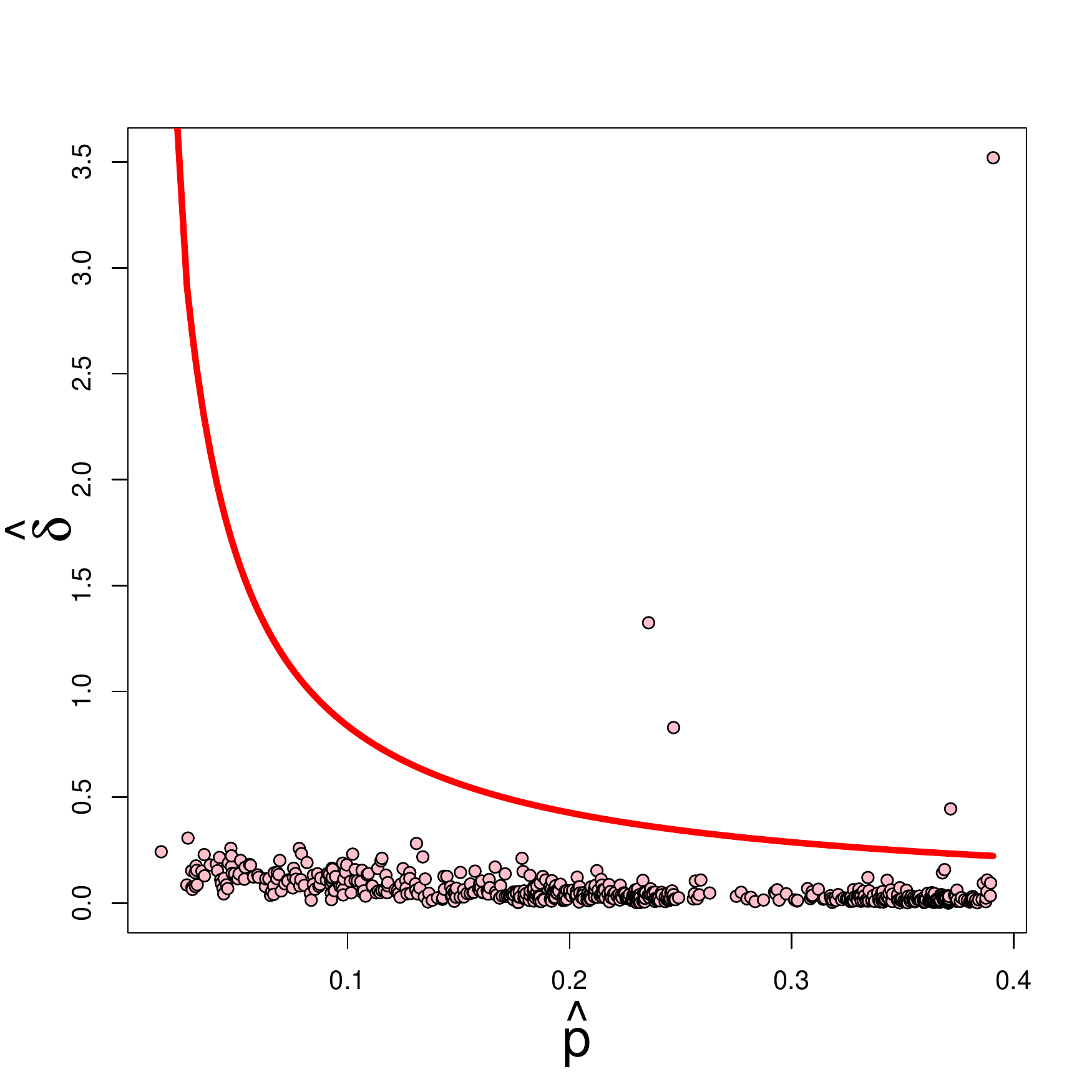} 
\end{tabular}
\end{center}
\vspace{-.4in}
\begin{center}
\captionof{figure}{\em Left: New dataset. Four crescent clusters, 
with added uniform noise. Right: Estimated Mode diagram 
$\cal{\hat{D}}$, threshold function.}
\label{fig::CrescNoise}
\end{center}
%\end{figure}

\vspace{-.5in}

%\begin{figure}
\begin{center}
\begin{tabular}{ll}
\includegraphics[scale=.35]{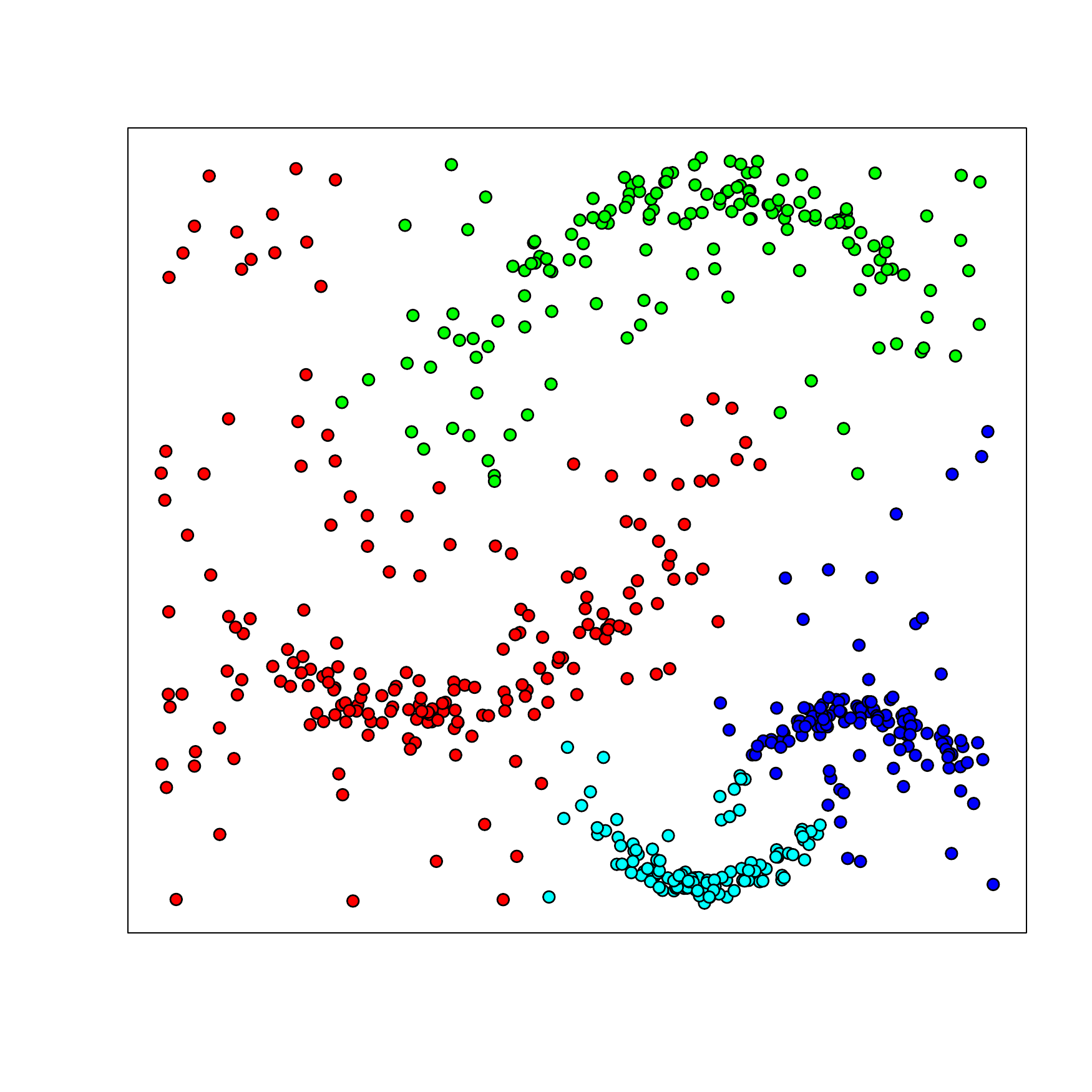} 
\includegraphics[scale=.35]{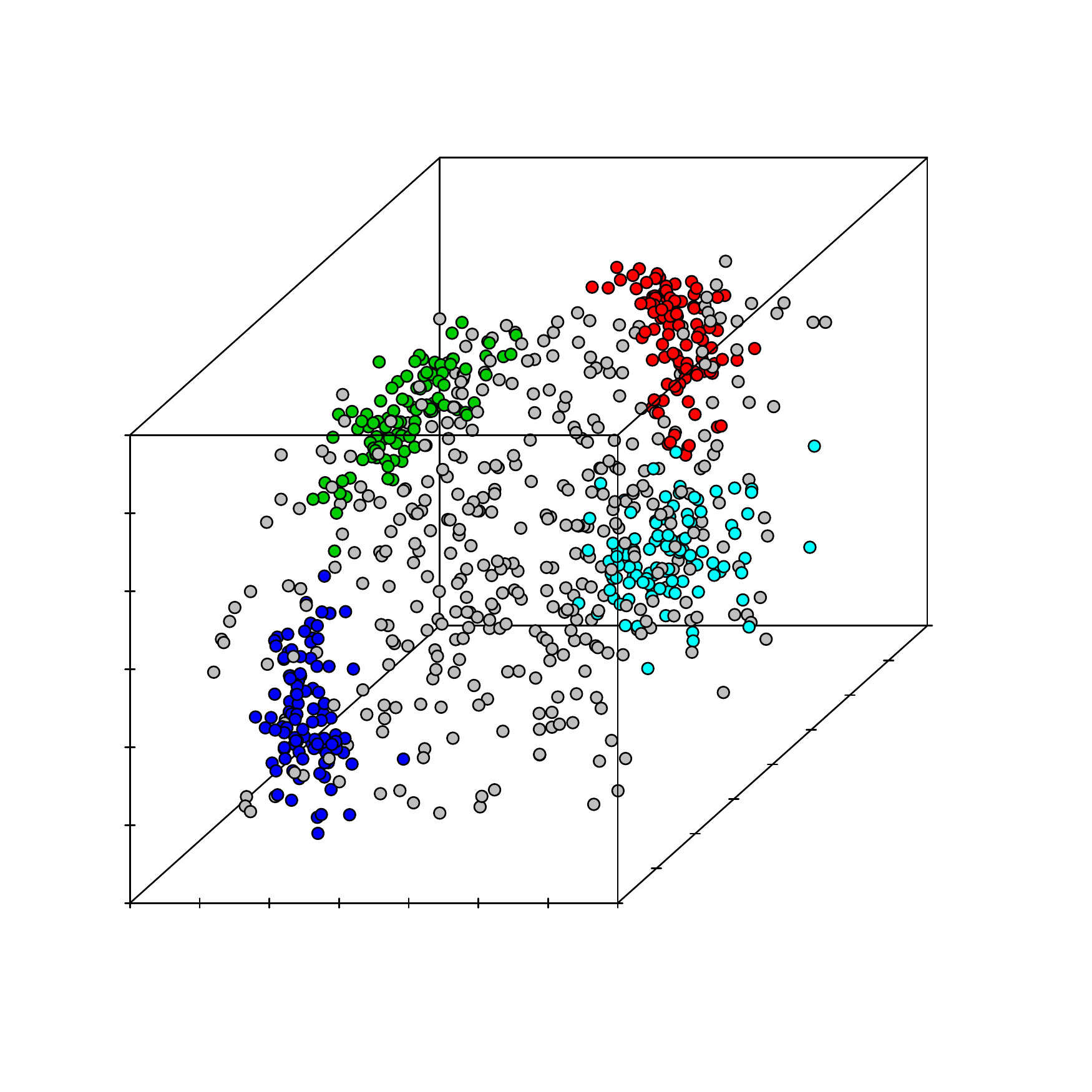} 
\end{tabular}
\end{center}
\begin{center}
\captionof{figure}{\em Left: Estimated clusters in the noisy
four crescents data. Right: Example of three dimensional data}
\label{fig::NEW}
\end{center}
%\end{figure}

The left plot in Figure \ref{fig::NEW} shows the identified clusters, 
in the noisy crescent data. Part of the random noise has been,
correctly, assigned to some of the four main clusters. 

The right panel in Figure~\ref{fig::NEW}
contains an example of a three dimensional data set, 
consisting of 400 data points, with four clusters, and 
400 points of uniform random noise. 

The procedure is 
unchanged when data are in more than two dimensions. 
This is clearly shown in the left panel of Figure 
\ref{fig::ThreeD1} where the mode diagram $\cal{\hat{D}}$ 
shows four points above the threshold function. The right 
panel in Figure \ref{fig::ThreeD1} presents the four 
estimated clusters where, as before,  points of 
random noise are assigned to the main clusters, according to 
their closeness to the modes.

\vspace{-.5in}

%\begin{figure}
\begin{center}
\begin{tabular}{ll}
\includegraphics[scale=.35]{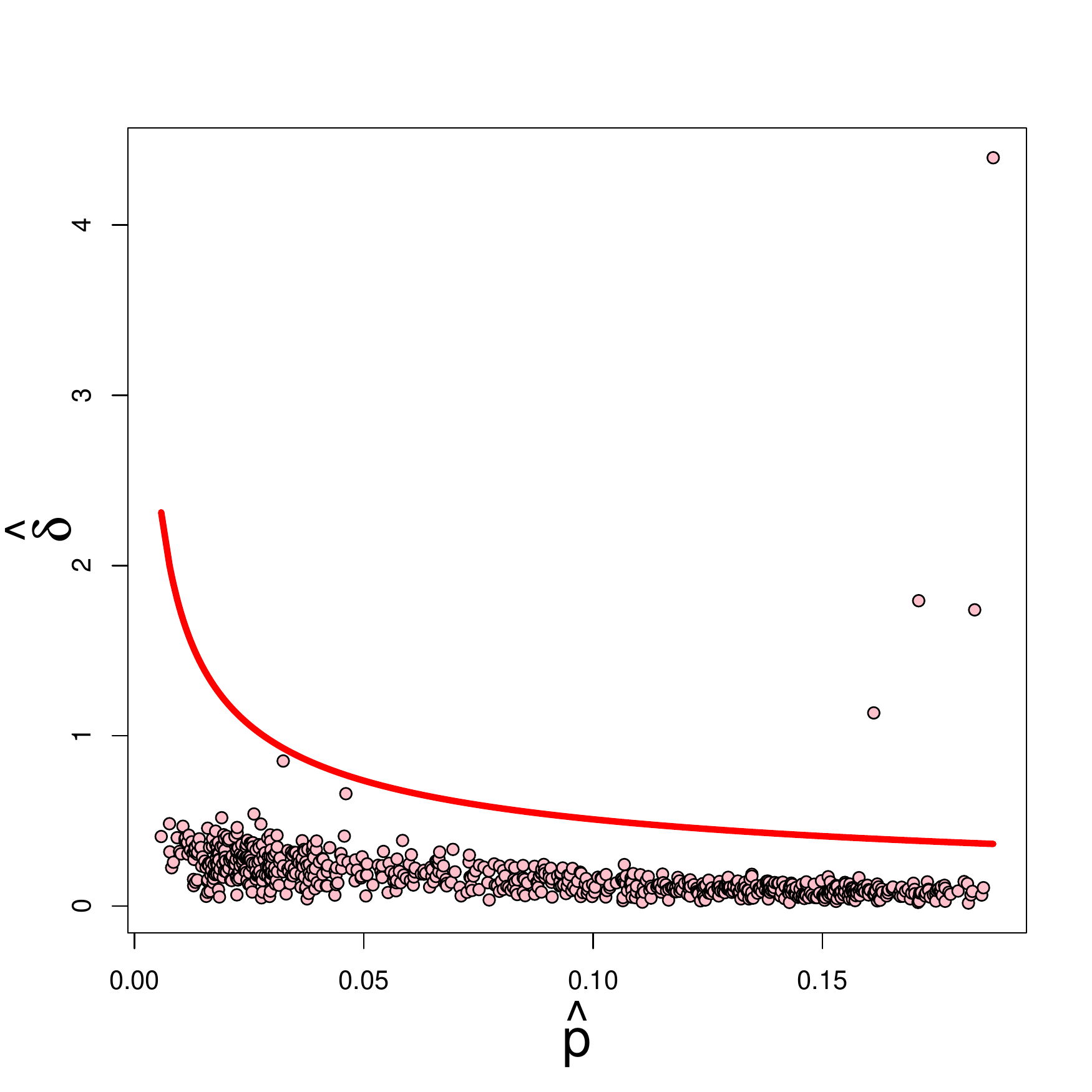} 
\includegraphics[scale=.35]{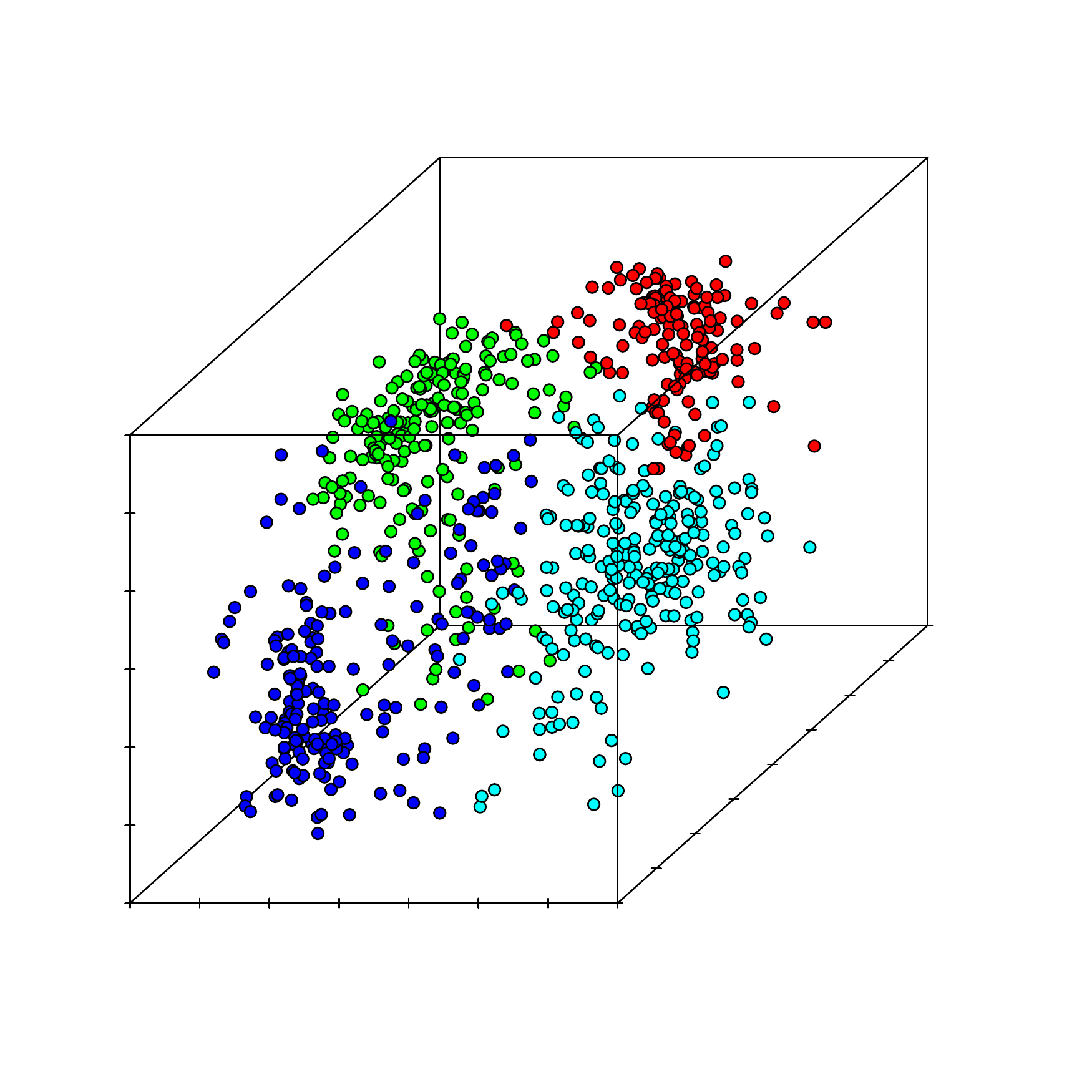} 
\end{tabular}
\end{center}
\begin{center}
\captionof{figure}{\em Three dimensional noisy data set.
Left: Estimated Mode diagram $\cal{\hat{D}}$, threshold 
function. Right: Estimated clusters}
\label{fig::ThreeD1}
\end{center}

\vspace{-.3in}
The mode diagram 
in Figure \ref{fig::ThreeD1}
includes two points with small values 
of both $\hat{\delta}$ and $\hat{p}$, that are just 
below the threshold function. In a different run of the code,
(Figure \ref{fig::ThreeD2}),
one of the points in the lower~left section of the diagram
is slightly above the threshold function. This would signal 
the existence of an extra cluster. But the right panel of
Figure \ref{fig::ThreeD2}, shows again only four clusters, 
meaning that the indication of the fifth mode is extremely 
weak. In fact for a sample point to~be a~mode,   
both values of $\hat{\delta}$ and $\hat{p}$ need 
to be large. In other runs of the code some of 
those points can be slightly farther 

\vspace{-.5cm}

%\begin{figure}
\begin{center}
\begin{tabular}{ll}
\includegraphics[scale=.35]{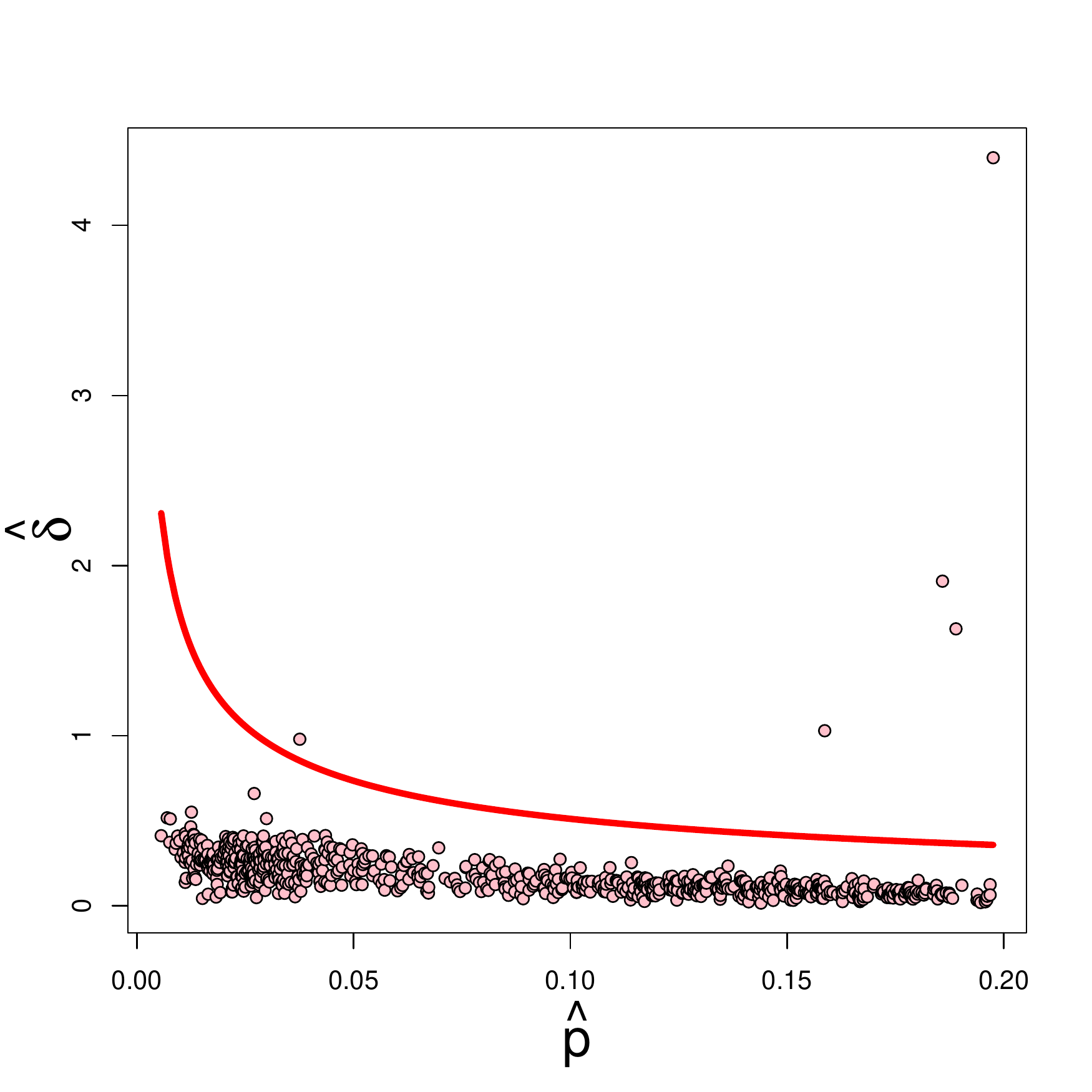} 
\includegraphics[scale=.35]{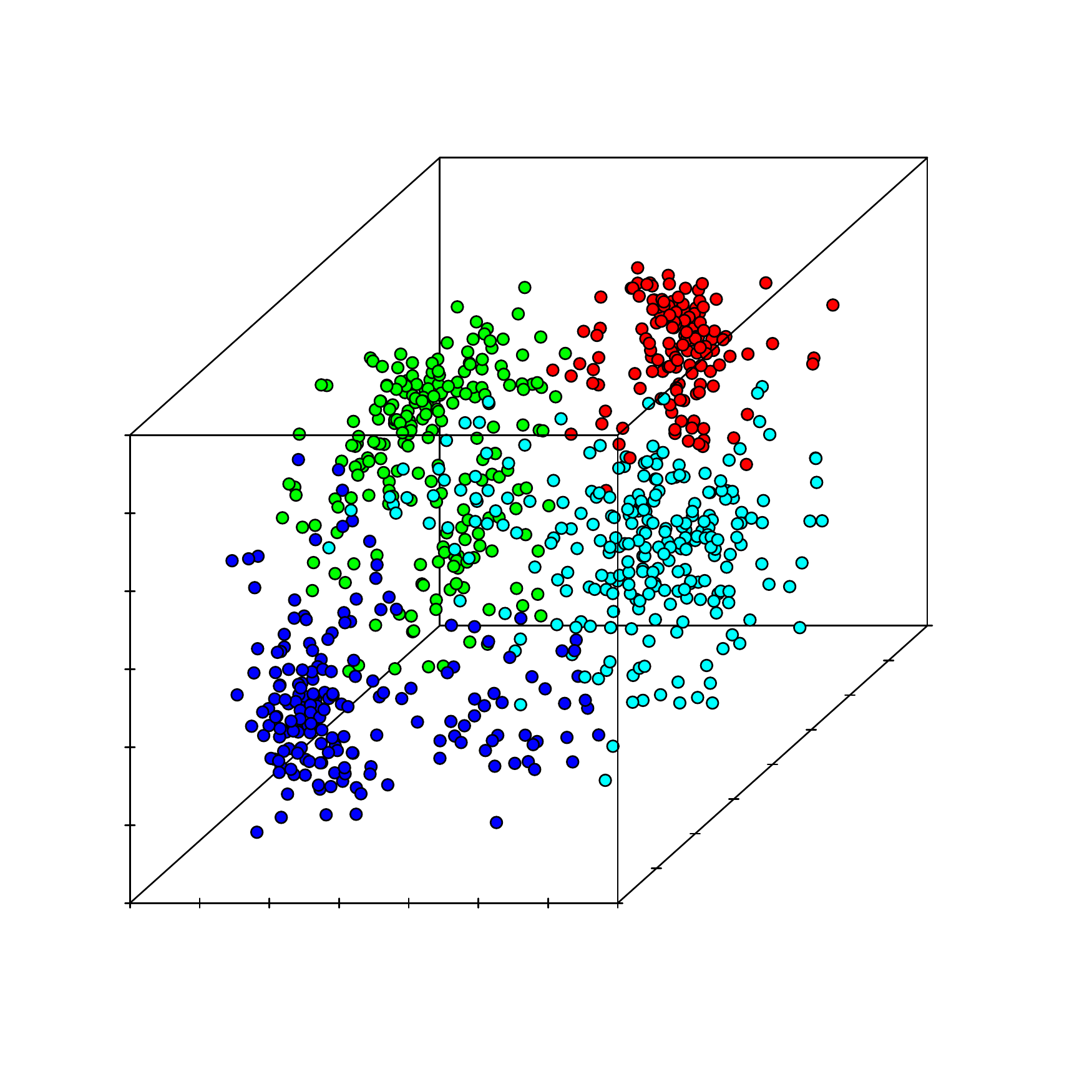} 
\end{tabular}
\end{center}
\begin{center}
\captionof{figure}{\em Three Dimensional Noisy data. 
Left: Estimated Mode diagram $\cal{\hat{D}}$, showing 
five modes above the threshold function. Right: Four 
estimated clusters}
\label{fig::ThreeD2}
\end{center}
%\end{figure}

\vspace{-.2in}

from the threshold function, so a fifth, or 
sixth cluster might be found, but they consist of a 
very small number of points. Thus, these extra clusters 
do not affect the qualitative results of our clustering 
procedure.

Our last example consists in $6,000$ data points in 
$15$-dimensions. Two multivariate Normal clusters have 
been included, with identity covariance matrix, and
mean vectors $-\mu${\bf 1}, and $\mu${\bf 1}. The plots 
in Figure \ref{fig::15Dim} show four estimates of 
$\cal{D}$, for increasing values of $\mu$.
The top left plot shows 
$\hat{\cal D}$ for $\mu = 0.5$. The clusters here are 
very close and
the mode diagram $\hat{\cal D}$ identifies only one 
mode presenting large values of both $\hat{p}$ and 
$\hat{\delta}$. The remaining points above 
the threshold indicate other possible modes, generated
by the closeness of the two clusters.
As the value of $\mu$ increases to $1, 3,$ and $10$,
the remaining three plots in Figure \ref{fig::15Dim}
show that the added distance between the modes in 
the data, identifies the presence of two separate clusters.

It is clear from this last example that
we could improve the clustering by increasing the choice of $M$
in (\ref{eqn::tn}).
In fact, from Figure \ref{fig::15Dim}
it seems that it would be simple to derive a data-adaptive
choice of $M$.
We could choose $M$ so that the number of points above $t_n$ remains stable.
This would correctly identify the two clusters in the cases with $\mu \geq 1$.
However, this idea is beyond the scope of the current paper.
More important is the fact that this simple two dimensional plot 
nicely summarizes the information in the 15-dimensional data.

\vspace{-.5in}

%\begin{figure}
\begin{center}
\includegraphics[scale=.7]{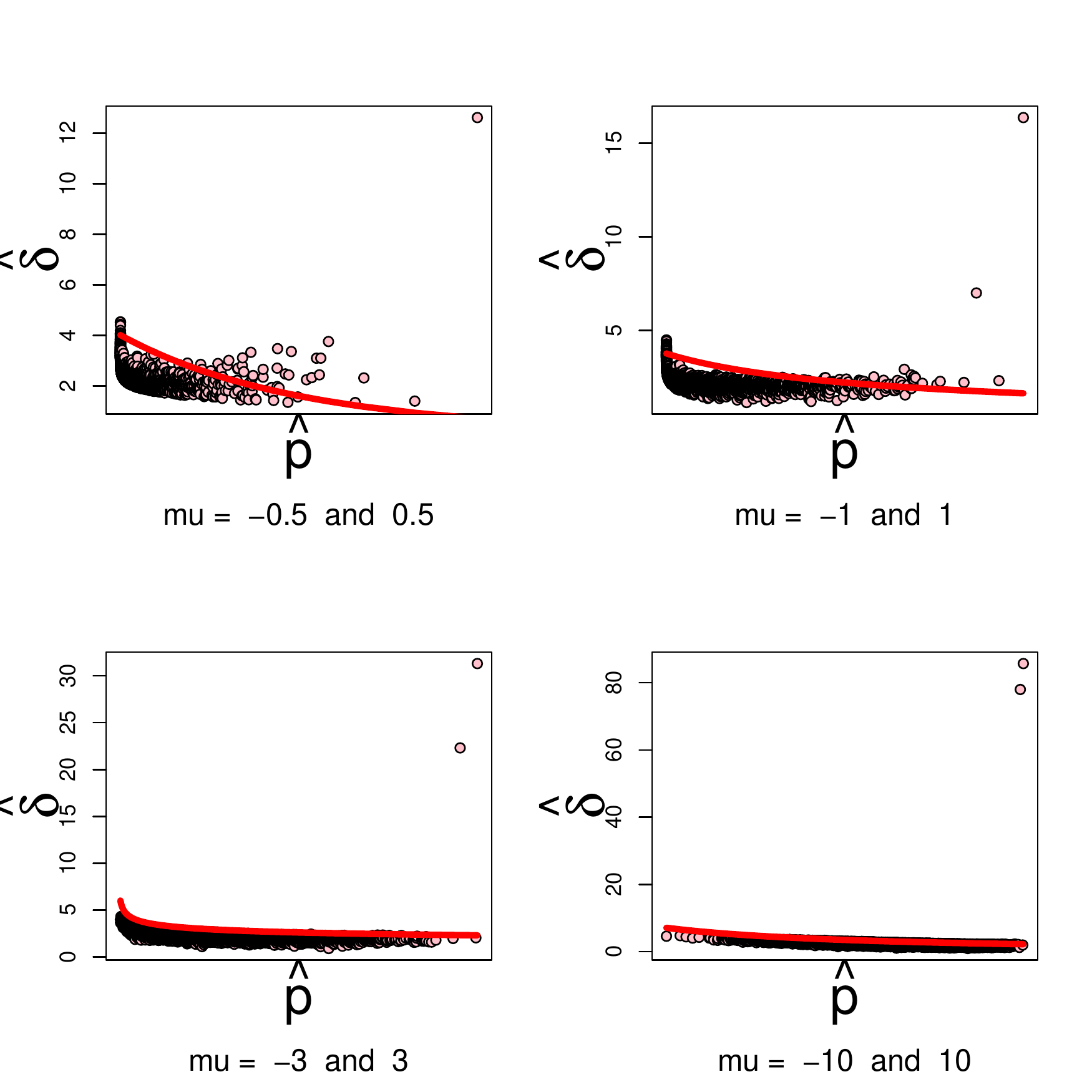} 
\end{center}
\begin{center}
\captionof{figure}{\em 15-dimensional data set. $6,000$
sample points. Plots of $\hat{D}$ for increasing distance 
between two clusters}
\label{fig::15Dim}
\end{center}
%\end{figure}

\vspace{-.5in}

\section{Conclusion}
\label{section::conclusion}

We have studied the properties of the mode diagram
introduced by
\cite{rodriguez2014clustering}.
We have seen that, for non-modes,
$\log \hat \delta(X_i)$ falls on or below a linear function
of $\log \hat p(X_i)$.
Based on this observation,
we suggested a robust regression method
for classifying points on the mode diagram as modes or non-modes.
We would like to emphasize that we think that the mode diagram
is a useful visualization method
for mode-based clustering
regardless of how one separates modes from non-modes.

Our analysis depended on a number of assumptions.
In particular, we assumed that the density $p$ is Morse.
This assumption is made --- explicitly or implicitly --- in most
density based clustering methods.
Loosely speaking, this means that $p$ has no ``flat regions.''
(A notable exception is \cite{jiang2016modal} who specifically allow for such flat region.)
We conjecture that the mode diagram still provides useful
information when $p$ is not Morse but proving this will require
new tools.

Although we have suggested a default value for $M$ that
defines the threshold functions,
it seems likely that it is possible to define a data-driven choice of $M$.
We hope to report on this in future work.

It would be interesting to develop a similar diagnostic plot
for other clustering methods such as $k$-means clustering.
Currently, we are not aware of any such diagnostic plots.

\section*{Appendix: Proofs}

{\bf Proof of Theorem \ref{thm::sample-mode}.}
{\em (i)} Let $X_{(j)}$ be the closest point to $m_j$.
From Lemma \ref{lemma::order-stat} below,
$||X_{(j)}-m_j|| \leq \epsilon_n$.
For any point $x\in B(m_j,\omega_j)$ 
we have
\begin{equation*}
p(x) = p(m_j)-\frac{1}{2}\;(x-m_j)^T \ \Biggl[\ \int_0^1 J(u \,m_j + (1-u\,)x)\ 
             du \ \Biggr]\ (x-m_j).
\end{equation*}
Thus for $X_j$
\begin{align}\label{eq::Taylor}
p(X_j) & = p(\,m_j) - \frac{1}{2} (X_j-m_j)^T 
    \left[\int_0^1 J(u, m_j + (1-u)\,X_j) \,  du \right] (X_j-m_j) \nonumber\\
     & \\
     & \leq p(m_j) - \frac{\lambda_j}{4} ||X_j-\!m_j||^2  \nonumber 
\end{align}
and for $X_{(j)}$
\begin{align*}
p(X_{(j)}) & =  p(\, m_j) -  \frac{1}{2}\,(X_{(j)}-m_j)^T 
        \left[\,\int_0^1 J(u \,m_j + (1-u)\,X_{(j)})\, du\ \right] (X_{(j)}-m_j)\\
          & \geq \,p(m_j) - \frac{\Lambda_j}{2} \; ||X_{(j)}-m_j||^2.
\end{align*}
Then
$$
p(\,m_j) - \frac{\lambda_j}{4} \; ||X_j-m_j||^2 \geq p(X_j) \geq p(X_{(j)}) \geq
p(\,m_j) - \frac{\Lambda_j}{2} \; ||X_{(j)}-m_j||^2
$$
which implies that
\begin{equation}\label{eq::Close}
||X_j-m_j||^2 \leq \frac{2\Lambda_j}{\lambda_j} \; ||X_{(j)} - m_j||^2 
           \leq \frac{2\Lambda_j}{\lambda_j}\; \epsilon_n^2 \leq \psi_{n,j}^2.
\end{equation}
This proves {\em(i)}.

{\em (ii)} Let $X_i$ be any point with $p(X_i) > p(X_j)$.
By definition, $X_i\notin B(m_j,\omega_j)$.
By the triangle inequality,
$$
\omega_j \leq ||X_i-m_j|| \leq ||X_i - X_j|| +  ||X_j-m_j|| \leq
||X_i-X_j|| + \sqrt{ \frac{2\Lambda_j}{\lambda_j}} \,\epsilon_n
$$
Thus, since $\epsilon_n \to 0$
$$
||X_i-X_j|| \geq \omega_j - \sqrt{\frac{2\Lambda_j}{\lambda_j}}\,\epsilon_n
\geq \omega_j/2,
$$
and $\delta(X_j) \geq \omega_j/2$. $\Box$

\begin{lemma}
\label{lemma::order-stat}
Let $X_{(j)}$ be the closest point to $m_j$ for $j=1,\ldots, k$.
Then
\begin{equation}
P^n\Bigl( \max_{1\leq j\leq k}||X_{(j)} - m_j|| > \epsilon_n\Bigr)\to 0.
\end{equation}
Hence,
with probability tending to 1,
each ball $B(m_j,\epsilon_n)$ contains at least one point.
\end{lemma}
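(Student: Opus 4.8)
The plan is to prove the lemma by a direct first-moment (covering) argument, reducing the multi-mode statement to $k$ single-mode events via a union bound. By (A3) the density $p$ has only finitely many critical points, so $k=|{\cal M}|$ is a fixed finite number and it suffices to control each mode separately. The key observation is that the event $\{\|X_{(j)}-m_j\|>\epsilon_n\}$ is precisely the event $A_{n,j}:=\{X_i\notin B(m_j,\epsilon_n)\text{ for all }i=1,\ldots,n\}$ that the small ball around $m_j$ receives no sample point. Since the $X_i$ are i.i.d.\ from $P$, we have $P^n(A_{n,j})=\bigl(1-P(B(m_j,\epsilon_n))\bigr)^n$, so everything reduces to a lower bound on the $P$-mass of a small ball around each mode.

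For that lower bound I would first note that each mode $m_j$ lies in the interior of the support ${\cal C}$: by definition $p(m_j)>p(y)$ on a punctured neighbourhood of $m_j$, and $p(m_j)\geq a>0$ by (A1); since $p$ is continuous and vanishes off ${\cal C}$, a boundary point of ${\cal C}$ would force $p(m_j)=0$, a contradiction. Hence for all $n$ large enough that $\epsilon_n$ is smaller than the distance from $m_j$ to $\partial{\cal C}$ --- which holds eventually since $\epsilon_n=(r\log n/n)^{1/d}\to 0$ --- we have $B(m_j,\epsilon_n)\subseteq{\cal C}$, and therefore
$$
P\bigl(B(m_j,\epsilon_n)\bigr)\;=\;\int_{B(m_j,\epsilon_n)}p(x)\,d\mu(x)\;\geq\;a\,\mu\bigl(B(m_j,\epsilon_n)\bigr)\;=\;a\,v_d\,\epsilon_n^d\;=\;\frac{a\,v_d\,r\,\log n}{n}.
$$

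It then remains to feed this into the elementary bound $(1-q)^n\leq e^{-nq}$, which gives $P^n(A_{n,j})\leq\exp(-a\,v_d\,r\log n)=n^{-a v_d r}$, uniformly in $j$ for $n$ large. A union bound yields
$$
P^n\Bigl(\max_{1\leq j\leq k}\|X_{(j)}-m_j\|>\epsilon_n\Bigr)\;\leq\;\sum_{j=1}^k P^n(A_{n,j})\;\leq\;k\,n^{-a v_d r},
$$
and because the hypothesis $r>1/(a v_d)$ makes the exponent $a v_d r$ strictly larger than $1$, the right-hand side tends to $0$ (and is in fact summable). The final ``hence'' assertion is just the complement: on the event $\max_j\|X_{(j)}-m_j\|\leq\epsilon_n$, each $B(m_j,\epsilon_n)$ contains the point $X_{(j)}$. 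The only step needing any care --- and the one I would flag as the main (minor) obstacle --- is justifying the clean lower bound $P(B(m_j,\epsilon_n))\geq a v_d\epsilon_n^d$, i.e.\ that the entire ball sits inside ${\cal C}$; the precise tuning $r>1/(a v_d)$ in the statement is exactly what is needed so that the resulting exponent exceeds $1$, leaving no slack to spare.
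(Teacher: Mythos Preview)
Your argument is correct and follows the same skeleton as the paper's proof: identify $\{\|X_{(j)}-m_j\|>\epsilon_n\}$ with the event that $B(m_j,\epsilon_n)$ contains no sample point, lower-bound $P(B(m_j,\epsilon_n))$, apply $(1-q)^n\le e^{-nq}$, and union-bound over the $k$ modes. The only difference is cosmetic: the paper uses continuity at $m_j$ and a mean-value point $y_{n,j}\in B(m_j,\epsilon_n)$ to get $P(B)\geq (p(m_j)/2)\,v_d\,\epsilon_n^d$, while you use the global floor $p\geq a$ from (A1); both give a bound of the form $n^{-c}$ with $c>0$.

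One small over-claim worth correcting: the hypothesis $r>1/(av_d)$ is \emph{not} what this lemma hinges on. For $k\,n^{-av_dr}\to0$ you only need $av_dr>0$, which holds for any $r>0$; the paper's own bound $(1/n)^{v_d p(m_j)/2}$ does not even invoke $r>1/(av_d)$. That sharper condition on $r$ is used elsewhere (the covering Lemma~\ref{lemma::covering}), where one must beat a prefactor $N\asymp \epsilon_n^{-d}\asymp n/\log n$, so the exponent genuinely has to exceed~$1$. Here there is plenty of slack.
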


\begin{proof}
Let $v_d$ be the volume of the unit ball, let 
$B = B(m_j,\epsilon_n)$. For a sequence of points
$y_{n,j}\in B$ converging to $m_j$ as $n\to\infty$
$$
P(B) = p(y_{n,j})\mu(B) = p(y_{n,j})\epsilon_n^d \;v_d
$$
Since $y_{n,j}\to m_j$ as $n\to\infty$, when $n$ is large
$p(y_{n,j}) > p(m_j)/2$, so
\begin{align*}
P^n(|| X_{(j)} - m_j|| > \epsilon_n) &=
P^n(||X_i-m_j|| > \epsilon_n\ {\rm for\ all\ }i) \\
& = [1-P(B)]^n \leq [1 - p(y_{n,j})\epsilon_n^d \,v_d]^n\\
& \leq
\exp\left\{ -n p(y_{n,j})\epsilon_n^d\, v_d\right\}\\
& = \left(\frac{1}{n}\right)^{ v_d \,p(y_{n,j})} 
\leq\left(\frac{1}{n}\right)^{ v_d \;p(m_j)/2}.
\end{align*}
Hence,
$$
P^n\Bigl( \max_{1\leq j\leq k}||X_{(j)} - m_j|| > \epsilon_n\Bigr) \leq 
\sum_{j=1}^k \left(\frac{1}{n}\right)^{ v_d \;p(m_j)/2} \to 0.
$$
\end{proof}

\bigskip
{\bf Proof of Theorem \ref{thm::oracle-theorem}.}
Let $t_n(u)$ be defined in \eqref{eq::tn-def}.

{\em (i)}  Let $X_j \in {\cal X}_1$ be a local mode.
From \eqref{eq::Taylor} and \eqref{eq::Close} it is immediate 
that $p(X_j) = p(m_j) + O_P(\psi_n^2)$. From Theorem 
\ref{thm::sample-mode}, 
$\delta(X_j)\geq \omega_j/2>0$,
then from the definition of $t_n(p(X_j))$ we have
$\frac{\delta(X_j)}{t_n(p(X_j))} \to \infty$.

{\em (ii)} Let $X_i\in {\cal X}_2$. 
Since $X_i \notin {\cal X}_1$ it is not a local mode.
So there exists a local mode 
$X_i\in B(m_j,\psi_{n,j})$ with $p(X_i) < p(X_j) \leq p(m_j)$.
Hence, $\delta(X_i) \leq \psi_{n,j}$.
So, from (\ref{eq::C}),
\begin{equation}\label{eq::delta2}
\delta^2(X_i)  \leq  \psi^2_{n,j} =
 \frac{2G^2 \Lambda_j \epsilon_n^2}{\lambda_j} \leq
 \left(\frac{C\log n}{n p(m_j)}\right)^{2/d} \leq
 \left(\frac{C\log n}{n p(X_i)}\right)^{2/d} =  t_n^2(p(X_i))
\end{equation}
as required.

\vspace{-.1in}

{\em (iii)} Let $x\in \Gamma^c$.
Recall that $p(x) \geq a >0$ for all $x$.
From (\ref{eq::delta2}), we see that
$t_n(x) \geq \psi_n = \max \psi_{n,j}$.

\newpage
So
\begin{align*}
P( \delta(x) > t_n(x)) & \leq
P( \delta(x) > \psi_n(x)) =
\prod_i P(X_i \notin B(x,\psi_n(x)) \bigcap L_x)\\
&=
\left[ 1- P(X_i \in B(x,\psi_n(x)) \bigcap L_x)\right]^n\\
& \leq
\left[ 1- a \mu( B(x,\psi_n(x)) \bigcap L_x)\right]^n\\
& \leq 
\left[ 1- a \tau \mu( B(x,\psi_n(x)) \right]^n\\
& \leq
\exp\left( - n a \tau v_d \psi_n^d \right) =
\exp\left( - n a \tau v_d 2^{d/2}G^d \left(\frac{\Lambda}{\lambda}\right)^{d} \frac{\log n}{n}\right)\\
& \leq
\exp\left( - 3\log n\right) = \left(\frac{1}{n}\right)^3
\end{align*}
where we used the fact that
$\mu( B(x,\psi_n(x)) \bigcap L_x) \geq \tau \mu( B(x,\psi_n(x)))$
due to (A4).
So,
$$
P\Bigl( \delta(X_i) > t_n(X_i)\ {\rm for\ some\ }X_i \in \Gamma^c\Bigr) \leq
n\left(\frac{1}{n}\right)^3 \to 0.\ \ \ \Box
$$

{\bf Proof of Lemma \ref{theorem::exponential}.}
Fix $s>0$ and let $B_n = B\left(x, \left(\frac{s}{n}\right)^{1/d}\right)$.
Define $A_x = B_n \bigcap \Bigl\{y:\ p(y) > p(x)\Bigr\}$.
There exists a sequence $y_n\to x$ such that
$P(A_x) = p(y_n)\mu(A_x)$.
From (A4'),
\begin{align*}
P(A_x) &= p(y_n)\mu(A_x) = p(y_n)\frac{\mu(A_x)}{\mu(B_n)} \mu(B_n) \\
&=
(p(x) + o(1)) (\tau + o(1)) \left(\frac{s}{n}\right)=
\frac{s p(x) \tau}{n} + o\left(\frac{1}{n}\right).
\end{align*}
Then 
\begin{align*}
P\left(n\; \delta(x)^d \leq s \right)
& = P\left[\delta(x)\leq(s/n)^{1/d}\right] 
  = 1-P\left[\delta(x)> (s/n)^{1/d}\right] \\
& = 1-\prod_{i=1}^n P\left[ X_i \notin A_x \;{\rm for\; all\;} X_i\right]
  = 1-\left[P(X \notin A_x)\right]^n = 1-\left[1-P(X\in A_x)\right]^n\\
& = 1- \exp\left\{n \log \left[1-P(X\in A_x)\right]\right\} 
  = 1- e^{-s\tau p(x)} e^{o(1)} \to    1- e^{-s\tau p(x)}.
\end{align*}
The final statement, about modes, follows since $\delta(x)$ is strictly positive.
$\Box$

{\bf Remark.}
If we had not assume that $p$ is bounded from below,
then one needs to work with the truncated region of the density 
$p(x) \geq a_n = n^{-1/(d+2)}$ and then 
$P^n(F_n^c) \leq N  [1- a\, v_d\, \epsilon_n^d]^n = 
\left[\frac{C_2}{\epsilon_n} \right]^d \; 
\left[1-n^{-\frac{1}{d+2}}\,v_d \epsilon_n  \right]^n \to 0$.

\bigskip

{\bf Proof of Theorem \ref{thm::about-threshold}.}
The proofs of (i) and (ii) mimic
the proof if Theorem 2, with $\hat p$ replacing $p$.
We focus on (iii).

In Lemma \ref{lemma::covering},
we show that there exists balls
$B(c_1,\epsilon_n),\ldots, B(c_N,\epsilon_n)$
such that
the support of $p$ is contained in
$\bigcup_{s=1}^N B(c_j,\epsilon_n)$ and such that,
$P^n (F_n)\to 1$
where $F_n$ is the event that each ball contains at least one data point.

Let 
$$
\Gamma = \bigcup_{j=1}^k B(\hat m_j,c_1 \epsilon_n).
$$
In Lemma \ref{lemma::killer}, we show that the following is true.
For every $x\in \Gamma^c$,
there exists a ball $B$ such that
(i) the radius of $B$ is $2\epsilon_n$,
(ii) $x\notin B$ but
(iii) $\min_{z\in B}||z-x|| \leq c_2 t_n(x)$ for some $c_2>0$ not depending on $x$.
Since this holds for all $x\in \Gamma^c$ it also holds for all
$X_i \in {\cal X}_3$.
So there is a ball $B_i$ such that
(i) the radius of $B_i$ is $2\epsilon_n$,
(ii) $X_i\notin B_i$ but
(iii) $\min_{z\in B}||z-X_i|| \leq c_2 t_n(X_i)$.
Now $B$ must contain at least one of the covering balls
$B(c_j,\epsilon_n)$.
On $F_n$,
this ball contains at least one point $X_j$, which is distinct from $X_i$.
It follows that
$\hat\delta(X_i) \leq c_2 t_n(X_i)$.
As this holds simultaneously for all
$X_i \in {\cal X}_3$, the result follows. $\Box$

\begin{lemma}
\label{lemma::covering}
There exists a set ${\cal B} =\{B_1,\ldots, B_N\}$
where each $B_j$ is a ball of radius $\epsilon_n$,
$N = \left[\xi/\epsilon_n\right]^d$ for some $\xi>0$,
${\cal X} \subset \bigcup_j B_j$. Let $F_n$ denote the event
that each ball contains at least one data point. Then
$P^n(F_n)\to 1.$
\end{lemma}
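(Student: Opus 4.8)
The strategy is a deterministic covering of the support combined with a union bound, using the defining relation $n\epsilon_n^d=r\log n$ for $\epsilon_n$ in \eqref{eq::tn-def}: each covering ball carries probability mass of order $\epsilon_n^d$, so it is missed by all $n$ points with probability of order $n^{-c_0 r}$, while there are only $O(n/\log n)$ such balls.

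\emph{The net.} By (A1) the support ${\cal C}$ is compact, hence has finite diameter $D$. For $n$ large enough that $\epsilon_n\le\epsilon_0$ (with $\epsilon_0$ from (A4)), choose a maximal $\epsilon_n$-separated subset $\{c_1,\ldots,c_N\}$ of ${\cal C}$ and set $B_j=B(c_j,\epsilon_n)$. By maximality every point of ${\cal C}$ lies within $\epsilon_n$ of some $c_j$, so ${\cal X}\subset{\cal C}\subset\bigcup_j B_j$, which gives the covering claim. For the count, the balls $B(c_j,\epsilon_n/2)$ are pairwise disjoint (centres are at distance $\ge\epsilon_n$) and lie inside a fixed ball of radius $D+1$, so $N v_d(\epsilon_n/2)^d\le v_d(D+1)^d$, i.e.\ $N\le(\xi/\epsilon_n)^d$ with $\xi=2(D+1)$. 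This settles all the deterministic assertions.

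\emph{Mass of each ball and the union bound.} Since $c_j\in{\cal C}$ and $p\ge a$ on ${\cal C}$,
\begin{equation*}
P(B_j)\ \ge\ \int_{B_j\cap{\cal C}}p\,d\mu\ \ge\ a\,\mu\bigl(B_j\cap{\cal C}\bigr)\ \ge\ a\,\tau\,v_d\,\epsilon_n^d ,
\end{equation*}
where the last step uses a standardness-type regularity of the support ${\cal C}$ — exactly the condition already imposed on the level sets via (A4) — to guarantee $\mu(B(c_j,\epsilon_n)\cap{\cal C})\ge\tau\,v_d\,\epsilon_n^d$ even when $c_j$ is close to $\partial{\cal C}$ (for interior centres one simply takes $\tau=1$). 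Thus $P(B_j)\ge c_0\,\epsilon_n^d$ for every $j$, with $c_0=a\tau v_d>0$. Then, using $1-u\le e^{-u}$ and $n\epsilon_n^d=r\log n$,
\begin{align*}
P^n(F_n^c)
&\ \le\ \sum_{j=1}^{N}P^n\bigl(X_i\notin B_j\ \text{for all}\ i\bigr)
 \ =\ \sum_{j=1}^{N}(1-P(B_j))^n\\
&\ \le\ N\,(1-c_0\epsilon_n^d)^n
 \ \le\ N\,e^{-c_0 n\epsilon_n^d}
 \ =\ N\,n^{-c_0 r}
 \ \le\ \frac{\xi^d}{r}\,\frac{n^{1-c_0 r}}{\log n}\ \longrightarrow\ 0
\end{align*}
as soon as $c_0 r\ge1$, which is the hypothesis $r>1/(av_d)$ up to the factor $\tau$. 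Hence $P^n(F_n)\to1$.

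\emph{Main obstacle.} The only genuinely delicate point is the mass estimate for covering balls whose centres sit near $\partial{\cal C}$: without a regularity hypothesis on the support, such a ball could meet ${\cal C}$ in a set of negligible measure and would then almost surely contain no data point. Everything else — the packing bound for $N$ and the Chernoff-type tail — is routine. (In the use of this lemma inside the proof of Theorem \ref{thm::about-threshold} one additionally wants the net to cover the centres of the balls supplied by Lemma \ref{lemma::killer}; since those centres lie within a constant multiple of $\epsilon_n$ of ${\cal C}$, it suffices to run the identical construction on a fixed $O(\epsilon_n)$-enlargement of ${\cal C}$, which only inflates the constant $\xi$.)
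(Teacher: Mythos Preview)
Your proof is correct and follows the same approach as the paper: cover the compact support by $O(\epsilon_n^{-d})$ balls of radius $\epsilon_n$, lower-bound the mass of each by a constant times $\epsilon_n^d$, and apply a union bound together with $n\epsilon_n^d=r\log n$. The paper is in fact less careful than you are --- it simply writes $P(B_j)\ge a\,v_d\,\epsilon_n^d$ without addressing balls near $\partial{\cal C}$ and asserts the covering number without an explicit packing argument; note, however, that (A4) as stated concerns the level sets $L_x$ rather than the support ${\cal C}$ itself, so your appeal to it for boundary balls is a mild extrapolation (one the paper sidesteps by ignoring the issue altogether, which is why its exponent is $a v_d r$ rather than your $a\tau v_d r$).
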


{\bf Proof of Lemma \ref{lemma::covering}.}
Since ${\cal X}$ is a compact subset of $\mathbb{R}^d$,
there exists a covering
${\cal B}=\{B_1,\ldots, B_N\}$
of the sample space with balls of size
$\epsilon_n$
where $N = \left[\xi/\epsilon_n\right]^d$ for some $\xi>0$.
Let $x_j$ denote the center of $B_j$.
Note that $P(X\in B_j) \geq a\, \epsilon_n^d v_d$
where $v_d$ is the volume of the unit ball.
Then
\begin{align*}
P^n(F_n^c) &= P^n({\rm some\ }B_j\ {\rm is\ empty}) \leq
\sum_{j=1}^N P^n(B_j\ {\rm is\ empty})\\
&=
\sum_{j=1}^N \prod_{i=1}^n P(X_i \notin B_j) =
\sum_{j=1}^N \prod_{i=1}^n [1-P(X_i \in B_j)]\\
& \leq
\sum_{j=1}^N \prod_{i=1}^n [1 - a\, v_d \,\epsilon_n^d]=
\sum_{j=1}^N  [1- a\, v_d\, \epsilon_n^d]^n = 
N  [1- a\, v_d\, \epsilon_n^d]^n \\
& \leq
N e^{-n \,a \,v_d \epsilon_n^d}=
\frac{\xi^d \;n}{\log n} \left(\frac{1}{n}\right)^{a \,v_d\,r} \to 0
\end{align*}
since $r > 1/(av_d)$.
Hence $P^n(F_n)\to 1$. $\Box$

\begin{lemma}
\label{lemma::killer}
Let $p$ be a Morse function
with finitely many critical points
and modes
${\cal M} = \{m_1,\ldots, m_k\}$.
Let $A_s$ denote the supremum of the $s^{\rm th}$ derivative of $p$
for $s=0,1,2,3$.
(Hence, $A_0 = \sup_x p(x)$.)
There exists positive constants $c_1$ and $c_2$ such that
the following is true.
For every $x\in \Biggl(\bigcup_{j=1}^k B(m_j,c_1 \epsilon)\Biggr)^c$,
there exists a ball $B$ of radius $2\epsilon_n$ such that:
\begin{enum}
\item $\max_{z\in B} ||z-x|| \leq c_2 t_n(x)$ and
\item $x\notin B$,
\item $p(z) > p(z)$ for all $z\in B$.
\end{enum}
\end{lemma}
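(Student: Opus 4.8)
The plan is to use that away from the modes $p$ has, at every point, a quantitatively ``uphill'' direction; this lets one plant a ball $B$ of radius $2\epsilon_n$ a short distance ($O(\epsilon_n)$) uphill from $x$ so that $p>p(x)$ throughout $B$ while $x\notin B$. Since $t_n(x)=(C\log n/(n\,p(x)))^{1/d}$ and $\epsilon_n=(r\log n/n)^{1/d}$ satisfy $\epsilon_n/t_n(x)=(r\,p(x)/C)^{1/d}\le(rA_0/C)^{1/d}$, an $O(\epsilon_n)$ bound on $\max_{z\in B}\|z-x\|$ is the same as a $c_2 t_n(x)$ bound, so it suffices to exhibit, for each $x$ outside the mode balls, a ball $B$ with $p>p(x)$ on $B$, $x\notin B$, and $\max_{z\in B}\|z-x\|=O(\epsilon_n)$. (In the statement ``$\epsilon$'' should read $\epsilon_n$, and conclusion~3 should read $p(z)>p(x)$ for all $z\in B$.) First I would record the deterministic preliminaries that make ``uphill'' quantitative: since $p$ is $C^3$ with finite derivative sups $A_1,A_2,A_3$, since $g$ is continuous and vanishes only at the finitely many critical points, and since $p$ is Morse, there are constants $\eta>0$, $\kappa_0>0$ and disjoint balls $\{B(c,\delta_c)\}$ around the critical points $c$ with (a) $\|g(y)\|\ge\eta\|y-c\|$ on $B(c,\delta_c)$ (from $g(y)=H(c)(y-c)+O(\|y-c\|^2)$ and invertibility of $H(c)$), and (b) $\|g\|\ge\kappa_0$ off $\bigcup_cB(c,\delta_c)$. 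Fix a constant $\kappa$, to be taken large relative to $A_2$; for $n$ large $\kappa\epsilon_n<\kappa_0$, so $\{\|g\|<\kappa\epsilon_n\}\subset\bigcup_cB(c,\delta_c)$.

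\emph{Case 1: $\|g(x)\|\ge\kappa\epsilon_n$.} Take $u=g(x)/\|g(x)\|$, $y=x+3\epsilon_n u$, $B=B(y,2\epsilon_n)$. Then $\|x-y\|=3\epsilon_n>2\epsilon_n$ so $x\notin B$, and $\max_{z\in B}\|z-x\|\le 5\epsilon_n$. For $z\in B$, writing $z-x=3\epsilon_n u+w$ with $\|w\|\le2\epsilon_n$ and using the exact second-order expansion, $p(z)-p(x)=g(x)^T(z-x)+\tfrac12(z-x)^TH(\xi)(z-x)\ge\epsilon_n\|g(x)\|-O(A_2\epsilon_n^2)\ge\kappa\epsilon_n^2-O(A_2\epsilon_n^2)>0$ once $\kappa$ is large enough (here $g(x)^Tu=\|g(x)\|$ and $\|z-x\|\le5\epsilon_n$).

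\emph{Case 2: $\|g(x)\|<\kappa\epsilon_n$.} Then $x\in B(c,\delta_c)$ for some critical point $c$, and (a) gives $\|x-c\|\le(\kappa/\eta)\epsilon_n$; set $\rho_0=\kappa/\eta$. If $c$ were a mode this would contradict $x\notin B(c,c_1\epsilon_n)$ as soon as $c_1\ge\rho_0$, so $c=c_i$ is a \emph{non-mode} critical point, and since $p$ is Morse the largest eigenvalue $\mu_1$ of $H(c_i)$ is strictly positive, with unit eigenvector $e_1$. Put $y=c_i+\alpha\epsilon_n e_1$, $B=B(y,2\epsilon_n)$, with $\alpha$ a large constant: then $\|x-y\|\ge(\alpha-\rho_0)\epsilon_n>2\epsilon_n$ and $\max_{z\in B}\|z-x\|\le(\alpha+\rho_0+2)\epsilon_n$ once $\alpha>\rho_0+2$. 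Writing $z=c_i+v'$, $x=c_i+v$ with $v'=\alpha\epsilon_n e_1+w$, $\|v'\|\le(\alpha+2)\epsilon_n$, $\|v\|\le\rho_0\epsilon_n$, a third-order expansion about $c_i$ (where $g$ vanishes) gives $p(z)-p(x)=\tfrac12\bigl[(v')^TH(c_i)v'-v^TH(c_i)v\bigr]+O(\epsilon_n^3)$; since $(v')^TH(c_i)v'\ge(\alpha^2-4\alpha)\mu_1\epsilon_n^2-O(A_2\epsilon_n^2)$ while $v^TH(c_i)v\le\mu_1\|v\|^2\le\mu_1\rho_0^2\epsilon_n^2$, this is $\ge\tfrac{\mu_1}{2}(\alpha^2-4\alpha-\rho_0^2)\epsilon_n^2-O(A_2\epsilon_n^2)-O(\epsilon_n^3)>0$ for $\alpha$ large and $n$ large. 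Taking $c_1=\rho_0$ and $c_2$ a constant absorbing $(\alpha+\rho_0+2)$ and $(rA_0/C)^{1/d}$ then yields all three conclusions.

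The step I expect to be the main obstacle is the saddle subcase of Case~2: near a saddle point of $p$ neither $p(x)-p(c_i)$ nor $g(x)$ has a controllable sign, so one cannot simply walk uphill starting at $x$. The remedy is to anchor the ball at $c_i$ itself and push it a distance $\alpha\epsilon_n$ (macroscopic in units of $\epsilon_n$) along the most positive Hessian eigenvector; this works because $v^TH(c_i)v\le\mu_1\|v\|^2$ for \emph{every} $v$, so the quadratic gain $\tfrac12\mu_1(\alpha\epsilon_n)^2$ at the centre of $B$ beats both the $\le\tfrac12\mu_1\|v\|^2$ by which $x$ may exceed $p(c_i)$ and the $O(\epsilon_n^2)$ contributed by the $2\epsilon_n$-thick skin of $B$, provided $\alpha$ is chosen large depending only on $\mu_1$, $\rho_0$ and $A_2$. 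The residual work is bookkeeping: checking that the enlarged radius $(\alpha+\rho_0+2)\epsilon_n$ still fits inside $c_2 t_n(x)$ (yes, since $\alpha$ is a fixed constant and $t_n(x)/\epsilon_n\ge(C/(rA_0))^{1/d}$), and --- when this lemma is applied to $\hat p$ in Theorem~\ref{thm::about-threshold} --- noting that on the event ${\cal A}_n$ of Lemma~\ref{lemma::kernel} the quantities $\eta$, $\kappa_0$, $\mu_1$ and the derivative sups for $\hat p$ lie within a fixed factor of those for $p$, so the same $c_1$, $c_2$ serve.
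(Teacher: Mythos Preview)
Your argument is correct and follows essentially the same two-case strategy as the paper's proof: near a non-mode critical point, push the ball along the positive Hessian eigendirection; otherwise, push along the gradient. Two implementation details differ from the paper but are equally valid: you split the cases by the size of $\|g(x)\|$ rather than by the distance to the nearest critical point (these are equivalent via your property~(a), which the paper also uses implicitly), and in the saddle case you anchor the ball at the critical point $c_i$ rather than at $x$ as the paper does --- your choice is arguably cleaner, since expanding about $c_i$ kills the linear term and reduces the computation to a pure Hessian comparison, whereas the paper must separately bound $|v^Tg(x)|$ via a Taylor expansion of $g$ about $s_j$.
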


{\bf Proof of Lemma \ref{lemma::killer}.}
Let $S=\{s_1,\ldots, s_N\}$ be the set of critical points that are not modes.
Let $u_n = c_1 \epsilon_n/2$.

{\bf Case 1:}
Suppose that
$||x-s_j||\leq u_n$ for some $s_j\in S$.
Note that $s_j$ cannot be a mode since
$u_n < c_1 \epsilon_n$.
Let $\lambda_j$ be the largest eigenvalue of $H(s_j)$ and note
that $\lambda_j>0$ since $s_j$ is not a mode.
Let $v$ be the corresponding eigenvector and define
$B \equiv B(y,2\epsilon_n)$ where
$$
y = x + c_3 \epsilon_n v
$$
where $c_3 \equiv c_3(x)$ is such that
\begin{equation}\label{eq::c3}
\max\Biggl\{2, 
4c_1 \sqrt{2}q_j/\lambda_j,\ 
\sqrt{ \frac{16}{\lambda_j}[\sqrt{2}q_j + 4 A_2]},\ 
\frac{32 A_2}{\lambda_j}\Biggr\} < c_3 < c_2 \left(\frac{C}{A_0}\right)^{1/d}-2.
\end{equation}
Here, $c_1$ and $c_2$ are any positive constants such that
the above interval is nonempty.
Let $q^2_j$ be the largest eigenvalue of $H^2(s_j)/2$.
Because $p$ is Morse, $q_j^2 >0$.

1. Let $z\in B$.
Then
$$
||z-x|| \leq ||z-y|| + ||y-x|| \leq 2\epsilon_n + c_3 \epsilon_n \leq c_2 t_n(x)
$$
where we used the fact 
(from (\ref{eq::c3})) that
$c_3 \leq c_2 (C/A_0)^{1/d} -2$
which implies that
$c_3 \leq c_2 (p(x)/A_0)^{1/d} -2$
and hence
$(2+c_3)\epsilon_n \leq c_2 t_n(x)$.

2. Next, note that
$$
\min_{z\in B} ||z-x|| = ||x-y|| - 2\epsilon_n = c_3 \epsilon_n - 2\epsilon_n >0
$$
since $c_3 > 2$.
Hence, $x\notin B$.

3. For all $0\leq r\leq 1$,
$v^T H(r y + (1-r)x) v \geq v^T H(s_j)v - O(\epsilon_n) \geq \lambda_j/2$ for all large $n$.
So
\begin{align}\nonumber
p(y) &=  p(x) + c_3 \epsilon_n v^T g(x) + \frac{c_3^2 \epsilon_n^2}{2} 
v^T \int_0^1 H(s y + (1-s)x) ds \ v \\
&\geq  p(x) + c_3 \epsilon_n v^T g(x) + \frac{c_3^2 \epsilon_n^2\lambda_j}{4} \geq
p(x) + 
\frac{c_3^2 \epsilon_n^2\lambda_j}{4} - c_3 \epsilon_n |v^T g(x)|.
\label{eq::py}
\end{align}
Now
$$
g(x)= g(s_j) + H(s_j)(x-s_j) +R_n=  H(s_j)(x-s_j) +R_n
$$
where the norm of the remainder $R_n$ is bounded by
$\sqrt{d}A_3 u_n^2$.
So, for all large $n$,
$$
g(x)^T g(x) = (x-s_j)^T H^2(s_j) (x-s_j) + O(u_n^3)\leq 
u_n^2 q_j^2 + O(u_n^3) \leq 2u_n^2 q_j^2.
$$
So $||g(x)|| \leq  \sqrt{2} q_j u_n$ and so
$c_3 \epsilon_n |v^T g(x)| \leq c_1 c_3 \sqrt{2}q_j \epsilon_n^2/2$
and from
(\ref{eq::py}) we conclude that
\begin{equation}\label{eq::py2}
p(y) \geq
p(x) -
\frac{c_1 c_3 \sqrt{2}q_j \epsilon_n^2}{2} +
 \frac{c_3^2 \epsilon_n^2\lambda_j}{4}  \geq p(x) + \frac{c_3^2 \epsilon_n^2 \lambda_j}{8}
\end{equation}
since
$c_3 \geq 8 c_1 \sqrt{2}q_j/(2\lambda_j)$.

Now consider any $z\in B$.
Then
$$
p(z) -p(x) = p(z) - p(y) + p(y) - p(x) \geq p(z) - p(y) + \frac{c_3^2 \epsilon_n^2 \lambda_j}{8}.
$$
We have
\begin{align*}
p(z) &=
p(y) + (z-y)^T g(y) + R_n >
p(x) + \frac{c_3^2 \epsilon_n^2 \lambda_j}{8}  + (z-y)^T g(y) + R_n\\
&=
p(x) + \frac{c_3^2 \epsilon_n^2 \lambda_j}{8} + (z-y)^T [g(x) + \tilde R_n] + R_n\\
&=
p(x) + \frac{c_3^2 \epsilon_n^2 \lambda_j}{8} + (z-y)^T g(x) + (z-y)\tilde R_n + R_n
\end{align*}
where
$$
|(z-y)^T g(x)| \leq ||z-y||\ ||g(x)|| \leq 2\epsilon_n \sqrt{2} q u_n = \sqrt{2} \epsilon_n^2 q,
$$
$$
|R_n| \leq ||z-y||^2 A_2 \leq 4\epsilon_n^2 A_2
$$
and
$$
||(z-y)\tilde R_n|| \leq ||z-y|| \ ||y-x|| A_2 \leq
(2 \epsilon_n) (c_3 \epsilon_n ) A_2 = 2 c_3 \epsilon_n^2 A_2
$$
so that
$$
p(z) > p(x) + 
\frac{c_3^2 \epsilon_n^2 \lambda_j}{8}-
\sqrt{2} \epsilon_n^2 q - 4\epsilon_n^2 A_2-2 c_3 \epsilon_n^2 A_2> p(x)
$$
since
$$
c_3 > \max\Biggl\{\sqrt{ \frac{16}{\lambda_j}[\sqrt{2}q_j + 4 A_2]},\ 
\frac{32 A_2}{\lambda_j}\Biggr\}.
$$

\vspace{1cm}

{\bf Case 2}.
Suppose that $||x-s_j||> u_n$ for all $s_j$.
First, we will need to lower bound $||g(x)||$.
By definition,
$$
x\in \Biggl[\Biggl(\bigcup_r B(s_j,u_n)\Biggr) \bigcup \Biggl(\bigcup B(m_j,c_1 \epsilon_n) \Biggr)\Biggr]^c.
$$
For all large $n$,
the minimum of $||g(x)||$ over this set occurs
at the boundary of one of these balls.
That is
$$
||g(x)|| \geq
\min
\Biggl\{
\min_{s_j\in S}\inf_{w\in \partial B(s_j,u_n)}||g(w)||,\ \ 
\min_{m_j\in {\cal M}}\inf_{w\in \partial B(m_j,c_1\epsilon_n)}||g(w)||\Biggr\}.
$$
Using a Taylor expansion of $g(w)$ as in Case 1,
we then have that
\begin{equation}\label{eq::ming}
||g(x)|| > \min_j c_1 \epsilon_n q_j/4 = c_1 \epsilon_n q/4.
\end{equation}

Choose $c_3 \equiv c_3(x)$ such that
\begin{equation}\label{eq::c3a}
8 \left(\frac{A_0}{C}\right)^{1/d} < c_3 <
\min\Biggl\{
c_2 - 2 \left(\frac{A_0}{C}\right)^{1/d},\ 
c_1 \left(\frac{a}{C}\right)^{1/d} \frac{q}{4A_2} \Biggr\}.
\end{equation}
Here, $c_1$ and $c_2$ are any positive constants such that
the interval is nonempty.
Let 
$B = B(y,2\epsilon_n)$ where
$$
y = x + \frac{c_3 t_n(x) g(x)}{||g(x)||}.
$$
So $||y-x|| = c_3 t_n(x)$.

1. Let $z\in B$.
Then
$$
||z-x|| \leq ||z-y|| + ||y-x|| \leq 2\epsilon_n + c_3 t_n(x) \leq c_2 t(x)
$$
since
$c_3 \geq c_2 - 2(A_0/C)^{1/d}$.

2. Next
$$
\min_{z\in B} ||z-x|| = ||x-y|| - 2\epsilon_n = c_3 t_n(x) - 2\epsilon_n >0
$$
since
$c_3 > 2 (A_0/C)^{1/d}$. So $x\notin B$.

3. First we note that
$$
p(y) = p(x) + (y-x)^T g(x) + R_n =
p(x) + c_3 t_n(x) ||g(x)|| + R_n
$$
where
$$
|R_n| \leq ||y-x||^2 A_2 /2 = c_3^2 t_n^2(x) A_2/2 < \frac{c_3 t_n(x) ||g(x)||}{2}
$$
since
$||g(x)|| > c_1 \epsilon_n q/4$ and 
$c_3 < (a/C)^{1/d} c_1 q/(4A_2)$.
So
$$
p(y) > p(x) + \frac{c_3 t_n(x) ||g(x)||}{2}.
$$
Now consider any $z\in B$.
Then
$$
p(z) -p(x) =
p(z) - p(y) + p(y) - p(x) >
p(z) - p(y) + \frac{c_3 t_n(x) ||g(x)||}{2}.
$$
Now
$$
p(z) - p(y) = (z-y)^T g(y) + O(\epsilon_n^2)
$$
and so
$$
|p(z) - p(y)| \leq 2\epsilon_n ||g(y)|| +O(\epsilon_n^2) \leq 2\epsilon_n ||g(x)|| + O(\epsilon^2)
$$
and this
$p(z) > p(x)$ since
$$
\frac{c_3 t_n(x) ||g(x)||}{2} > 2\epsilon_n ||g(x)||
$$
since
$c_3 > 8 \left(\frac{A_0}{C}\right)^{1/d}$. $\Box$

\bibliographystyle{plainnat}
\bibliography{paper}

\end{document}